\documentclass{article}

\usepackage{amsmath,amssymb,amsthm}
\usepackage{mathtools}
\usepackage{bm}
\usepackage{tikz}
\usetikzlibrary{calc}
\usepackage[colorlinks=true,allcolors=blue]{hyperref}
\usepackage{enumitem}
\usepackage{natbib}
\newtheorem{theorem}{Theorem}
\newtheorem{proposition}{Proposition}
\newtheorem{lemma}{Lemma}
\newtheorem{corollary}{Corollary}
\theoremstyle{definition}
\newtheorem{definition}{Definition}
\newtheorem{assumption}{Assumption}
\theoremstyle{remark}
\newtheorem{remark}{Remark}

\newcommand{\Prob}{\mathbb{P}}
\newcommand{\simplex}{\Delta}                  \newcommand{\bidset}{\mathcal{B}}               \newcommand{\BR}{\operatorname{BR}}      \newcommand{\payoff}{\pi}                         \newcommand{\policy}{\gamma}             \newcommand{\dirac}[1]{\delta_{#1}}                                                     \newcommand{\indic}[1]{\mathbf{1}\!\left\{#1\right\}}

\usepackage{xcolor}
\usepackage{amsmath}
\DeclareMathOperator*{\argmax}{arg\,max}

\usepackage{xcolor}
\usepackage{tcolorbox}

\title{On the Role of Tie-Breaking Rules in the Convergence of Fictitious Play for Symmetric First-Price Auctions}

\author{Benjamin Heymann\footnote{Criteo AI Lab, FairPlay joint team, Paris, France}}

\begin{document}

\maketitle
\begin{abstract}
We study continuous-time fictitious play in 2-bidder, symmetric first-price
auctions with independently distributed discrete values and a discrete
bid set. We first exhibit a minimal instance --- two bidders, two
values, three positive bids --- on which fictitious play with the
standard uniform-split tie-breaking rule does \emph{not} converge to
the symmetric Bayes--Nash equilibrium: the equilibrium is 
unstable and the dynamics converge to a stable limit cycle far from the Nash equilibrium. We then show that a
small modification of the tie-breaking rule --- awarding a payoff of
zero to every bidder in case of a tie --- restores convergence:
fictitious play converges to a Nash equilibrium of the modified game.
This limit is an $\epsilon$-equilibrium of the original
auction in a broad range of settings. 
\end{abstract}

\section{Introduction}
\label{sec:intro}

Fictitious play is a classical game-theoretic learning dynamic where players update beliefs about opponents by tracking the empirical frequencies of their actions. At each iteration, every player computes and plays a best response to these beliefs.
Fictitious play was introduced by \citet{brown1951iterative} and
\citet{robinson1951iterative} the same year independently, and it is known that it converges to a Nash equilibrium in
several classes of games: two-player zero-sum games
\citep{robinson1951iterative}, $2\times n$ games
\citep{berger2005fictitious}, potential and identical-interest games
\citep{monderer1996fictitious}. This convergence
property is, however, not universal. There exist famous
counterexamples: the $2\times2$ instance of
\citet{monderer1996a2}, Shapley's cycling example
\citep{shapley1964some}, 
Jordan's
three-player matching-pennies variant \citep{jordan1993three} (see also ~\citep{FOSTER199879,gaunersdorfer1995fictitious}). 
The present paper adds an auction-theoretic counterexample to this list and demonstrates that the convergence of fictitious play hinges on the auction's tie-breaking rule.

The computation of equilibria in first-price auctions
\citep{krishna2009auction} has been researched for decades. A
theoretical characterization is available in the symmetric case when
values are drawn from a continuous distribution
\citep{vickrey1961counterspeculation,milgrom1982theory,maskin2003uniqueness}.
Seminal numerical methods rely on systems of ordinary differential
equations derived from the first-order conditions
\citep{marshall1994numerical,fibich2011numerical,fibich2012asymmetric},
but these are known to be numerically unstable; the discrete-value case
is treated separately in \citet{wang2020bayesian}. In parallel, it is of
independent interest to know whether \emph{learning} dynamics converge
to an equilibrium, both because such dynamics provide a behavioral
justification of equilibrium play and because they remain agnostic to
the structure of the value distribution. This question has acquired
practical urgency since the migration of display-advertising
marketplaces to first-price auctions \citep{paesleme2020why}, which
renewed interest in their computational properties
\citep{filosratsikas2021complexity,bichler2021learning}.
Recent contributions in the field of computing auction equilibrium include in particular the work of Bichler and co-authors~\citep{bichler2021learning,bichler2023learning,bichler2025computing,bichler2025beyond,ahunbay2025uniqueness} (differentiable approaches, no-regret approaches).

 The analysis is motivated by, and
complements, the empirical findings reported in
\citet{heymann2025empirical} (see also the earlier version
\cite{heymann2021heuristic}) for which the authors developed the open-source
library \texttt{FP4FPA} \citep{fp4fpa}.
 There, fictitious play was
observed to converge to an $\epsilon$-equilibrium across a broad set of
first-price auction instances, including instances with correlated
values, and the convergence  visibly required to replace the
standard tie-breaking rule with one that awards a payoff of zero to all
bidders in case of a tie.
Observe that a  methodology that guaranties convergence to an estimate Nash equilibrium in first-price auction with correlated values is yet to be found\footnote{A number of methods can identify a Nash equilibrium if they converge, yet convergence itself remains uncertain.}, and that the  empirical results from~\citet{heymann2025empirical}
give hope that the methodology developed in this paper could further extend to some  settings with correlated values. 
The present paper provides the theoretical
counterpart of those observations in the 2-bidder, symmetric, independent,
discrete setting. Our contributions are:
\begin{enumerate}[label=(\roman*)]
  \item a minimal, fully worked counterexample
    (Section~\ref{sec:ce}) showing that continuous-time fictitious play
    with the standard uniform-split rule fails to converge to the
    symmetric Bayes--Nash equilibrium, converging instead to a stable
    limit cycle;
  \item a convergence theorem (Section~\ref{sec:main}) showing that
    under the modified zero-on-tie rule, continuous-time fictitious play
    converges to a Nash equilibrium of the modified game. We then justify with Theorem~\ref{thm:epsilon-uniform} why this equilibrium is also
     $\epsilon$-equilibrium of the original auction in many settings.
\end{enumerate}

\section{First-price auction and fictitious play}
\label{sec:model}

For a finite set $X$ we write $\simplex(X)$ for the set of probability
distributions on $X$ and $\dirac{x}\in\simplex(X)$ for the unit mass at
$x\in X$. We write $\indic{\cdot}$ for the indicator of an event.

\subsection{Symmetric first-price auctions with independent discrete values}
\label{sec:symmetric}

An auction is a Bayesian game in which the players are the bidders. Each bidder observes a private valuation and submits a bid. The item is allocated to the highest bidder, who pays their own bid according to the first-price rule.

We focus on symmetric auctions with independent private values. This means that each bidder's valuation is drawn independently from a common distribution supported on the finite set
$0 \le \theta_1 < \cdots < \theta_m,
$
where value $\theta_j$ occurs with probability $\rho_j$. Bids are restricted to the finite grid
$
\bidset=\{\beta_1,\dots,\beta_K\}$, such that $
0\leq \beta_1<\cdots<\beta_K.
$

Since both the value and bid spaces are finite, a (mixed) bidding strategy is naturally represented as a collection of probability  distributions over admissible bids,
 one for each value:  $
\policy=(\policy_1,\dots,\policy_m)$ with
$
\policy_j\in\simplex(\bidset)
$ denotes the bidding policy associated with valuation $\theta_j$. 

We  deliberately restrict our attention to symmetric strategy profiles, in which every bidder uses the same bidding policy $\policy$.

\subsection{Two tie-breaking rules}
\label{sec:tiebreak}

We compare two tie-breaking rules, that clarify what happens when the two bidders have the same bid. 

\begin{definition}[Uniform tie-breaking]
The bidders   receive the item with probability $1/2$ and pay their bids. 
\end{definition}

\begin{definition}[Zero-on-tie]
\label{def-zero-on-tie}
The payoff is $0$ for each bidder.
\end{definition}

The uniform-split rule is the most standard convention. The zero-on-tie  is hard to justify as an auction
\emph{mechanism}, but we argue that it is a legitimate device for \emph{computing}
equilibria: as we show in Section~\ref{sec:main} it makes fictitious play
converge, and  the two games
have nearby equilibria when the bid grid is fine enough, so that the limit
is an $\epsilon$-equilibrium of the original auction.

\subsection{Continuous-time fictitious play}
\label{sec:fp}

We study fictitious play in continuous time. For a symmetric policy
$\policy=(\policy_1,\dots,\policy_m)$, define the best-reply
correspondence of value type $\theta_j$ by
\begin{align}
  \BR_j(\policy)
  &=
  \argmax_{b\in\bidset}
  \payoff_j(\dirac b,\policy),
  \label{eq:br}
\end{align}
where 
\begin{align}
\payoff_j(\mu,\policy)
=
\sum_{b\in\bidset}
\mu(b)
\left[
(\theta_j-b)\,
P_{\mathrm{win}}(b,\policy)
\right],
\end{align}
and  
$P_{\mathrm{win}}(b,\policy)$  is the probability that bid b wins against the opponents following the  policy $\policy$ (depends on the tie breaking rule).
Continuous-time fictitious play is the differential inclusion
\begin{align}
  \dot{\policy}_j
  &\in
  \overline{\operatorname{conv}}
  \BR_j(\policy)
  -
  \policy_j,
  \qquad
  j=1,\ldots,m,
  \label{eq:di}
\end{align}
where
$\overline{\operatorname{conv}}\BR_j(\policy)\subseteq\simplex(\bidset)$
denotes the set of probability distributions\footnote{this is notably different from the formalism in ~\citep{krishna1998convergence}, which does not allow for randomization} supported on
$\BR_j(\policy)$.

The right-hand side of \eqref{eq:di} is a nonempty, compact,
convex-valued, and upper semicontinuous correspondence. Standard results
on differential inclusions therefore guarantee the existence of
absolutely continuous solutions from every initial condition
\citep{aubinDifferentialInclusionsSetValued1984,benaim2005stochastic}, and we take
\eqref{eq:di} as the object of analysis throughout.

The differential inclusion \eqref{eq:di} is the continuous-time
counterpart of the  discrete-time fictitious play process. At
iteration $k$, each value type selects a best reply
$b_j^{(k)}\in\BR_j(\policy^{(k)})$ and updates its policy according to
$
  \policy_j^{(k+1)}
  =
  (1-\eta_k)\policy_j^{(k)}
  +
  \eta_k\,\dirac{b_j^{(k)}},
$, $
  \eta_k=\frac{1}{k+1}$.
Under the logarithmic time rescaling the interpolated
discrete trajectories track the solutions of \eqref{eq:di}.

\section{Failure of Convergence with Uniform Tie-Breaking}
\label{sec:ce}

\subsection{The auction and its equilibrium}
We take\footnote{we use a zero based indexation here as it is more convenient in this section}: 
The value is high, $\theta_1 = 1$ with probability $\rho_1=1/3$ and low $\theta_0=0$
with probability $\rho_0=2/3$. Bids belong to
$\bidset=\{0,0.1,0.2,0.3\}$, and we write
$\beta_0=0,\ \beta_1=0.1,\ \beta_2=0.2,\ \beta_3=0.3$. 
Low type $0$  ( obtains $0$ by bidding $\beta_0=0$ and a negative
payoff from any winning positive bid, so it bids $0$. 
In what follows, we denote by $x(t)=(x_0(t),x_1(t),x_2(t),x_3(t))\in\simplex(\bidset)$  the empirical
distribution of the high type over the four bids.
We first record that
the high type never bids $0$.

\begin{lemma}
\label{lem:nozero}
With uniform tie-breaking,  bid $\beta_0=0$ is never a best reply
for the high type:
$\beta_0\notin\overline{\operatorname{conv}}
  \BR_1(\policy(t))$ for almost all $t$.
\end{lemma}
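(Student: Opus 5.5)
Under uniform tie-breaking we compare the payoff of $\beta_0=0$ with that of $\beta_1=0.1$ for the high type ($\theta_1=1$) against the symmetric profile $\policy(t)$. Throughout, the low type bids $0$; if its initial condition puts mass elsewhere, that mass decays exponentially but stays positive, and we handle this below. The opponent bids $0$ with probability $q_0(t)=\rho_0\,\policy_0(t)(\beta_0)+\rho_1 x_0(t)$. By the tie rule,
\begin{align*}
\payoff_1(\dirac{\beta_0},\policy)=\tfrac12 q_0(t),
\end{align*}
since a zero bid wins only by tie and then pays nothing. Bid $0.1$ beats every zero bid and ties at $0.1$, so
\begin{align*}
\payoff_1(\dirac{\beta_1},\policy)\ \ge\ 0.9\,q_0(t).
\end{align*}
Hence $\payoff_1(\dirac{\beta_1},\policy)-\payoff_1(\dirac{\beta_0},\policy)\ge 0.4\,q_0(t)$. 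This is strictly positive whenever $q_0(t)>0$, and then $\beta_0\notin\BR_1(\policy(t))$.

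Next we show that $q_0(t)>0$ for all $t>0$, or at least for almost all $t$.
\begin{itemize}[label=$\cdot$]
\item \emph{Low type mass at zero.} For the low type, $\beta_0$ yields $0$ and every positive bid yields a payoff $\le 0$. It is a strict loss whenever that bid wins with positive probability, which always holds for a bid at least as high as the opponent's. So $\beta_0\in\BR_0$ always.
\item \emph{Low type uniqueness.} Some positive bid $b$ could tie for best reply only if it never wins, i.e. $P_{\mathrm{win}}(b,\policy)=0$. That is impossible because the opponent's low type puts mass near $\beta_0<b$.
\item \emph{Consequence.} Once $\policy_0(\beta_0)>0$, we have $\BR_0=\{\beta_0\}$ and
\begin{align*}
\dot\policy_0(\beta_0)=1-\policy_0(\beta_0),
\end{align*}
so $\policy_0(\beta_0)(t)\to1$ and stays positive. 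Then $q_0(t)\ge\rho_0\policy_0(t)(\beta_0)>0$.
\item \emph{Degenerate start.} If initially $\policy_0(\beta_0)=0$, a short argument is needed. Either $\beta_0$ is still a best reply for the low type, or the dynamics immediately push mass there. In both cases $\policy_0(\beta_0)>0$ after an arbitrarily small time.
\end{itemize}

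Given $q_0>0$, the differential inclusion forces the $\beta_0$-component of any selection from $\overline{\operatorname{conv}}\BR_1$ to vanish. This holds for all $t$ outside a null set, namely the initial instant or degenerate measure-zero boundary, which accounts for the "almost all $t$" in the statement.

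The main obstacle is the degenerate initial conditions in which nobody bids $0$. I would handle these by the short-time argument above, or by restricting to the natural initial condition in which the low type starts at $\dirac{\beta_0}$.
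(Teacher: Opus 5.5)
Your core argument is the paper's: a pointwise comparison showing that bid $0.1$ strictly beats bid $0$ for the high type. Your bound $\payoff_1(\dirac{\beta_1},\policy)-\payoff_1(\dirac{\beta_0},\policy)\ge 0.4\,q_0(t)$ is a slightly cleaner form of the paper's chain, whose first inequality $\tfrac13+\tfrac{x_0}{6}<\tfrac12$ is in fact only weak at $x_0=1$. The paper simply takes the low type to bid $\beta_0$, so $q_0=\tfrac23+\tfrac{x_0}{3}\ge\tfrac23$, and everything after your first paragraph is unnecessary.

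If you keep the extension to arbitrary low-type initial conditions, the ``degenerate start'' bullet is not an argument as written. It is true that $\beta_0\in\BR_0$ always, but that does not force the selector onto $\beta_0$: any $b>0$ with $P_{\mathrm{win}}(b,\policy)=0$ is also a best reply, for example when everyone starts at $0.3$. Your ``low type uniqueness'' bullet assumes exactly the point in question.

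The clean fix is the zero-set argument of Lemma~\ref{lem:flat-tie}. A bid $b>0$ is in $\BR_0$ only if the opponent puts no mass on bids $\le b$, which forces $\policy_0(b)=0$. So the low type's selector $\alpha_0$ can charge $b$ only on the zero set of $\policy_0(b)$. Since $\policy_0(b)$ is absolutely continuous and nonnegative, $\dot\policy_0(b)=0$ a.e.\ on that set, whence $\alpha_0(b)=\dot\policy_0(b)+\policy_0(b)=0$ a.e. Hence $\alpha_0(\beta_0)=1$ a.e., so $\policy_0(t)(\beta_0)\ge 1-e^{-t}>0$ for every $t>0$, and your bound then gives the lemma for all $t>0$.
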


\begin{proof}
 The opponent has value
$0$ with probability $2/3$ (then bids $0$) and value $1$ with probability
$1/3$ (then bids according to $x$). The unconditional opponent bid
probabilities for $0,0.1,0.2,0.3$ are respectively
$\tfrac23+\tfrac{x_0}{3}$,
  $\tfrac{x_1}{3}$,
  $\tfrac{x_2}{3}$, and
  $\tfrac{x_3}{3}$.
Hence, the high type's payoff from a bids $0$ and 1 are respectively\footnote{using  the formula
$(1-b)\big[\Prob(\text{opponent bids below }b)+\tfrac12\Prob(\text{opponent bids }b)\big]$}
$ \tfrac13+\tfrac{x_0}{6}<\tfrac13+\tfrac{1}{6}=\frac12$
  and  $\tfrac{9}{10}\Big(\tfrac23+\tfrac{x_0}{3}+\tfrac{x_1}{6}\Big)>\frac12$.
We conclude that the bid 0 is strictly dominated by the bid  0.1 for the high type.
\end{proof}

Consequently the best reply lies in $\{\beta_1,\beta_2,\beta_3\}$ and the
$x_0$-coordinate decays autonomously, $\dot x_0=-x_0$. We therefore study
the dynamics on the high-type spanned by $(x_1,x_2,x_3)$.

We can then derive  a \textbf{Nash equilibrium} for this game. 
\begin{proposition}
\label{prop:ce-eq}
The policy $\gamma^\star$ defined as 
$\gamma_0^\star =(1,0,0,0)$ and $ \gamma_1^\star =(0,4/13,3/13,6/13)$
is a Nash equilibrium with common supported interim payoff  for the high type $\tfrac{42}{65}\approx0.6462$.
\end{proposition}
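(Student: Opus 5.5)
The plan is a direct verification of the equilibrium conditions. I would first compute the opponent's unconditional bid distribution induced by $\gamma^\star$, then evaluate the interim payoff of every pure bid for each type, and finally check that every bid in the support of $\gamma_j^\star$ attains the maximum. Because we restrict to symmetric profiles, it suffices to show that $\gamma^\star$ is a best reply to itself type by type.

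\textbf{Opponent distribution.} As in the proof of Lemma~\ref{lem:nozero}, the opponent bids $0$ with probability $\tfrac23+\tfrac{x_0}{3}$ and bids $\beta_k$, $k\ge1$, with probability $\tfrac{x_k}{3}$, where now $x=\gamma_1^\star$. With $x=(0,4/13,3/13,6/13)$ this gives, over a common denominator $39$,
\[
\bigl(26/39,\;4/39,\;3/39,\;6/39\bigr).
\]

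\textbf{High type.} I would use the payoff formula from the footnote in the proof of Lemma~\ref{lem:nozero}, namely $(1-b)\bigl[\Prob(\text{opponent bids below } b)+\tfrac12\Prob(\text{opponent bids } b)\bigr]$. This yields
\begin{align*}
\beta_1 &: \tfrac{9}{10}\cdot\tfrac{26+2}{39}, \\
\beta_2 &: \tfrac{8}{10}\cdot\tfrac{26+4+3/2}{39}, \\
\beta_3 &: \tfrac{7}{10}\cdot\tfrac{26+4+3+3}{39}.
\end{align*}
Each of these equals $\tfrac{25.2}{39}=\tfrac{42}{65}$. Bid $\beta_0$ gives only $\tfrac12\cdot\tfrac{26}{39}=\tfrac13<\tfrac{42}{65}$; this is also covered by Lemma~\ref{lem:nozero}. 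Hence every bid in the support of $\gamma_1^\star$ is a best reply, and the common payoff is $\tfrac{42}{65}$, as stated.

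\textbf{Low type.} For $\theta_0=0$, bidding $0$ yields payoff $0$. Any positive bid $b$ wins or ties with positive probability, since the opponent bids $0$ with probability $2/3$, so its payoff $(0-b)P_{\mathrm{win}}(b,\gamma^\star)$ is strictly negative. Therefore $\gamma_0^\star=\dirac{0}$ is the unique best reply for the low type.

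The main point of care is the first step rather than any genuinely hard estimate: one must weight ties by $\tfrac12$ correctly and include both opponent types in the winning probabilities. Once the opponent distribution is written over the denominator $39$, the remaining equalities are routine arithmetic.
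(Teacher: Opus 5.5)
Your proposal is correct and follows the paper's proof: you compute the opponent's unconditional bid distribution $(26/39,4/39,3/39,6/39)$ and verify that bids $0.1$, $0.2$, $0.3$ each give the high type $42/65$. You also check explicitly that bid $0$ gives the high type only $1/3$ and that bidding $0$ is the low type's unique best reply. The paper's proof leaves these two checks implicit, relying on Lemma~\ref{lem:nozero} and the discussion preceding it.
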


\begin{proof}
Under this putative equilibrium, 
the unconditional opponent bid
probabilities for $0,0.1,0.2,0.3$ are respectively
$\tfrac23+\tfrac{\gamma_1^\star(0)}{3}=\tfrac23$,
  $\tfrac{\gamma_1^\star(1)}{3}=\tfrac{4}{39}$,
  $\tfrac{\gamma_1^\star(2)}{3}=\tfrac{3}{39}$, and
  $\tfrac{\gamma_1^\star(3)}{3}=\tfrac{6}{39}$. For a high type, the  payoff for bidding $0.1$ $0.2$ or $0.3$ is respectively
$0.9(2/3+0.5\tfrac{4}{39})=\tfrac{42}{65}$,
$0.8(2/3+\tfrac{4}{39}+0.5\tfrac{3}{39})=\tfrac{42}{65}$, and
$0.7(2/3+\tfrac{4}{39}+\tfrac{3}{39}+0.5\tfrac{6}{39})=\tfrac{42}{65}$.
\end{proof}
It is easy to show that this equilibrium is the unique symmetric equilibrium of the game. 
\subsection{Reduced dynamics and  Poincar\'e map}
Let $u_i$ denote the expected payoff of the high type from bidding
$\beta_i$, $i=1,2,3$. Since only the ordering of these payoffs matters
for best responses, we introduce the scaled payoff differences
\begin{align*}
A &= 60\,(u_1-u_2)=4+2x_0-7x_1-8x_2, &\\
B &= 60\,(u_2-u_3)=-3+9x_0+9x_1+x_2.
\end{align*}
And because $60\,(u_1-u_3)= 60\,(u_1-u_2) + 60\,(u_2-u_3)=A+B$, we can identify
the best-reply regions as follows: 
\begin{align*}
  \BR_1(\gamma(t))&=\{\beta_1 \}\iff A>0,\ A+B>0 \\
  \BR_1(\gamma(t))&=\{\beta_2\} \iff A< 0,\ B> 0\\
  \BR_1(\gamma(t))&=\{\beta_3 \}\iff B< 0,\ A+B< 0 \ .
\end{align*}
The regions are separated by the three switching planes
$A=0$, $B=0$, and $A+B=0$. Their common intersection in the
coordinates $(A,B,x_0)$ is the line $A=B=0$. On the invariant face
$x_0=0$, this intersection reduces to the equilibrium high-type policy
$\gamma_1^\star=(0,4/13,3/13,6/13)$.
In the coordinates $(A,B,x_0)$ the differential inclusion \eqref{eq:di}
becomes piecewise affine, with the $x_0$-direction purely contracting
($\dot x_0=-x_0$) and the $(A,B)$-dynamics independent of $x_0$:
\begin{equation}
  \begin{array}{lll}
  \BR_1(\gamma(t))=\{\beta_1 \}\implies & \dot A=-3-A, & \dot B=6-B,\\
   \BR_1(\gamma(t))=\{\beta_2\}\implies & \dot A=-4-A, & \dot B=-2-B,\\
 \BR_1(\gamma(t))=\{\beta_3 \}\implies & \dot A=4-A, & \dot B=-3-B\ .
  \end{array}
  \label{eq:pwa}
\end{equation}

The situation is pictured in Figure~\ref{fig:ce-theory}.
Let $\Sigma=\{A=0,\ B=s>0,\ x_0=c\}$ be the section between the red and green areas, where the trajectory
crosses from the $\beta_1$-region into the $\beta_2$-region. Integrating
\eqref{eq:pwa} successively through the regions $\beta_2\to\beta_3\to\beta_1$
gives the following:
\begin{itemize}
    \item  first,   from $(A,B)=(0,s)$ in the {\color{green}{green}} region where $\BR_1=\beta_2$, we have 
    \begin{itemize}
        \item $A(t)=-4+4e^{-t}$ and $B(t)=-2+(s+2)e^{-t}$;
        \item the next switch is at $B=0$, i.e.
$e^{-t_1}=2/(s+2)$, where $A=-r$ with $r=\tfrac{4s}{s+2}$;
    \end{itemize}
    \item     then, from $(A,B)=(-r,0)$ in the   {\color{blue}{blue}}  region where $\BR_1=\beta_3$, follows 
    \begin{itemize}
        \item $A(t)=4+(-r-4)e^{-t}$, and $B(t)=-3+3e^{-t}$;
        \item the next switch is at $A+B=0$,
i.e. $e^{-t_2}=1/(r+1)$, where $(A,B)=(q,-q)$ with $q=\tfrac{3r}{r+1}$;
\end{itemize}

    \item     finally, from $(A,B)=(q,-q)$ in the {\color{red}{red}}  region where $\BR_1=\beta_1$, follows 
    \begin{itemize}
        \item $A(t)=-3+(q+3)e^{-t}$ and $B(t)=6+(-q-6)e^{-t}$;
        \item the return to $\Sigma$ is at
$A=0$, i.e. $e^{-t_3}=3/(q+3)$, where $B'=\tfrac{3q}{q+3}$.
\end{itemize}
\end{itemize}

Composing $r=\tfrac{4s}{s+2}$, $q=\tfrac{3r}{r+1}$, and $B'=\tfrac{3q}{q+3}$
gives the  \textbf{Poincar\'e map}
\begin{equation}
P(s)=\dfrac{12s}{9s+2} .
  \label{eq:poincare}
\end{equation}
Including the contracting coordinate, the full return map on the
two-dimensional section is
\begin{equation}
  (s,c)\ \longmapsto\ \Big(\tfrac{12s}{9s+2},\ \tfrac{2}{9s+2}\,c\Big).
  \label{eq:returnmap}
\end{equation}

\begin{theorem}
\label{thm:ce}
For the instance above, the symmetric Bayes--Nash equilibrium $x^\star$ is
 unstable under continuous-time fictitious play with the
uniform-split rule, and the dynamics admit a stable periodic orbit.
\end{theorem}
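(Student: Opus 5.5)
Our approach is to study the return map \eqref{eq:returnmap} on the section $\Sigma$, parametrized by $s>0$ and $c=x_0\in[0,1]$. We first justify that the map is well defined. For $s>0$ the trajectory leaving $\Sigma$ enters the $\beta_2$-region and crosses the boundaries transversally: $\dot A=-4<0$ at $A=0,B>0$, then $\dot B=-2<0$ at $B=0,A<0$, then $\dot A+\dot B=1>0$ at $A+B=0$ in the $\beta_3$-region, and finally $\dot A=-3<0$ at $A=0$ in the $\beta_1$-region. Consequently the differential inclusion has a unique Filippov solution along the orbit away from the line $A=B=0$. The successive switching times $t_1,t_2,t_3$ are positive because $2/(s+2)$, $1/(r+1)$ and $3/(q+3)$ all lie in $(0,1)$. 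Their total is $e^{-(t_1+t_2+t_3)}=\frac{2}{s+2}\cdot\frac{1}{r+1}\cdot\frac{3}{q+3}$, and substituting $r$ and $q$ simplifies this to $2/(9s+2)$. This gives the contraction factor on $c$ and confirms \eqref{eq:returnmap}. The only routine verification is the composition $P(s)=12s/(9s+2)$, which we take as already done.

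Next we analyze the one-dimensional map $P$. Its fixed points satisfy $12s=s(9s+2)$, so they are $s=0$, which corresponds to the equilibrium $A=B=0$ (with $x_0\to 0$, i.e.\ $\gamma_1^\star$), and $s^\ast=10/9$. The derivative is $P'(s)=24/(9s+2)^2$. We get $P'(0)=12>1$, so small perturbations of the equilibrium along the section are expanded: $P(s)\approx 12s$, and the amplitude $s$ grows geometrically while it is small. This is the instability claim. To make it precise we also check that every trajectory starting near $\gamma^\star$ other than $\gamma^\star$ itself enters this cyclic rotation through the regions $\beta_2\to\beta_3\to\beta_1$. For this we use the fact that in each region the $(A,B)$ flow is affine and points toward a target outside that region: $(-3,6)$ lies in the $\beta_2$-region, $(-4,-2)$ lies in the $\beta_3$-region, and $(4,-3)$ lies in the $\beta_1$-region. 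Hence the orbit must reach $\Sigma$ in finite time, and it cannot converge to the origin since $P(s)>s$ for $s\in(0,10/9)$.

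At $s^\ast$ we have $P'(s^\ast)=24/144=1/6<1$. Moreover, $P$ is increasing and concave with $P(s)>s$ on $(0,s^\ast)$ and $P(s)<s$ on $(s^\ast,\infty)$, so $P^n(s)\to s^\ast$ for every $s>0$. For the second coordinate, the factor $2/(9s+2)$ tends to $2/12=1/6<1$. The product of the factors along the iterates therefore tends to zero, so $c_n\to 0$. The fixed point $(s^\ast,0)$ of the two-dimensional return map therefore corresponds to a periodic orbit lying on the invariant face $x_0=0$, with period $t_1+t_2+t_3=\ln\big((9s^\ast+2)/2\big)=\ln 6$. This orbit attracts every trajectory that starts on $\Sigma$ with $s>0$. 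Stability of the orbit itself, as opposed to the section point, follows from continuous dependence of the flow on initial conditions over the finite return time.

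The step we expect to be the main obstacle is the rigorous treatment of the differential inclusion near the switching planes. The delicate points are that no sliding mode occurs (transversality, as checked above), that solutions from arbitrary initial conditions actually reach $\Sigma$, and that the original simplex coordinates map correctly onto the $(A,B,x_0)$ chart, including the constraint that $x$ stays in the simplex. We would handle this by noting that the affine change of variables from $(x_1,x_2,x_3)$ to $(A,B,x_0)$ is invertible, and by working entirely in the $(A,B)$ plane, where the analysis above is explicit.
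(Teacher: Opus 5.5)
Your proposal is correct and follows the paper's own argument: the Poincar\'e map $P(s)=12s/(9s+2)$ has a repelling fixed point at $s=0$ and an attracting one at $s^\star=10/9$ with $P'(s^\star)=1/6$, and the $c$-coordinate contracts by $2/(9s+2)\to 1/6$. Your transversality, return-time ($2/(9s+2)$, period $\ln 6$) and reachability checks add rigor the paper leaves implicit, but there are two minor slips that do not affect the conclusion: first, $P'(0)=24/4=6$ (so $P(s)\approx 6s$), not $12$; second, the claim that \emph{every} trajectory starting near $\gamma^\star$ enters the rotation must exclude initial points with $A=B=0$, $x_0>0$, since from those the admissible selector $\gamma_1^\star$ keeps $(A,B)=(0,0)$ and the solution converges to $\gamma^\star$ (instability only needs the escaping trajectories with small $s>0$, which you already have).
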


\begin{proof}
Differentiating \eqref{eq:poincare}, $P'(s)=\tfrac{24}{(9s+2)^2}$, so
$P'(0)=24/4=6>1$: perturbations on the section are expanded, and the
equilibrium is  unstable. The fixed-point equation
$s=\tfrac{12s}{9s+2}$ has roots $s=0$ and $9s+2=12$, i.e.\ $s^\star=10/9$,
where $P'(s^\star)=24/12^2=1/6<1$. Moreover
\[
  P(s)-s=\frac{s(10-9s)}{9s+2},
\]
so $P(s)>s$ for $0<s<10/9$ and $P(s)<s$ for $s>10/9$: the map pushes
trajectories toward $s^\star$. The derivative of  $\tfrac{2}{9s+2}$ in
\eqref{eq:returnmap} equals $1/6$ at $s^\star$, so the full return map has
derivative $\operatorname{diag}(1/6,1/6)$ at $(s^\star,0)$ and the cycle is
attracting in both directions.
\end{proof}

Figure~\ref{fig:ce-sim} shows  a simulation on a discrete fictitious play.  

\begin{figure}[t]
  \centering
  \includegraphics[width=0.82\linewidth]{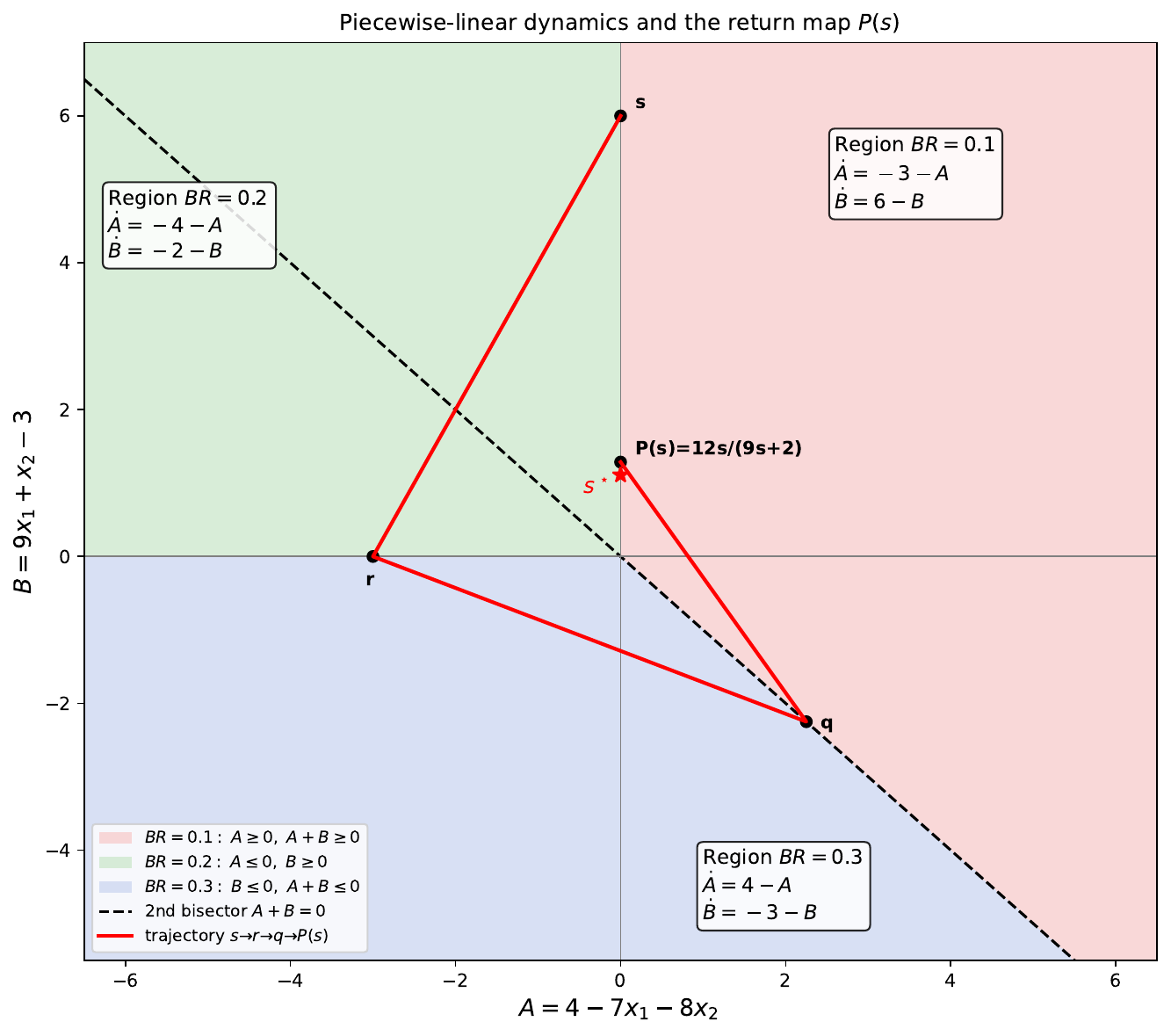}
  \caption{Piecewise-affine continuous-time fictitious-play dynamics in the
  $(A,B)$ plane (the invariant face $x_0=0$) for the $3$-bid
  counterexample under the uniform-split rule. The three shaded cells are
  the best-reply regions; the red path is one return loop
  $s\to r\to q\to P(s)$ through the section $\{A=0,\ B>0\}$. The unstable
  equilibrium sits at $A=B=0$ and the cycle fixed point at $s^\star=10/9$
  (red star). 
}
  \label{fig:ce-theory}
\end{figure}

\begin{figure}[t]
  \centering
  \begin{minipage}{0.49\linewidth}\centering
    \includegraphics[width=\linewidth]{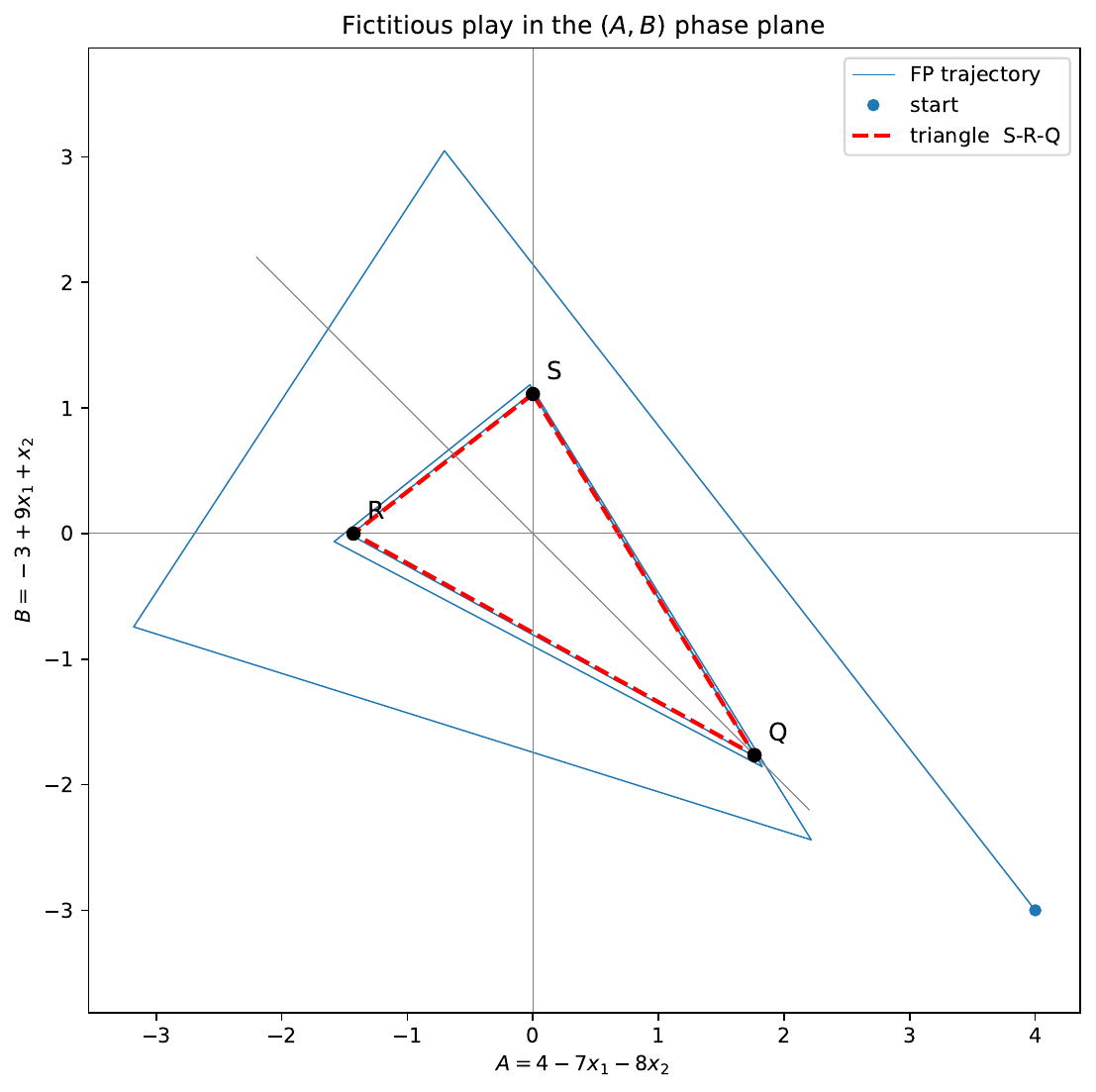}\\[-2pt]
    {\small (a) starting from a generic point}
  \end{minipage}\hfill
  \begin{minipage}{0.49\linewidth}\centering
    \includegraphics[width=\linewidth]{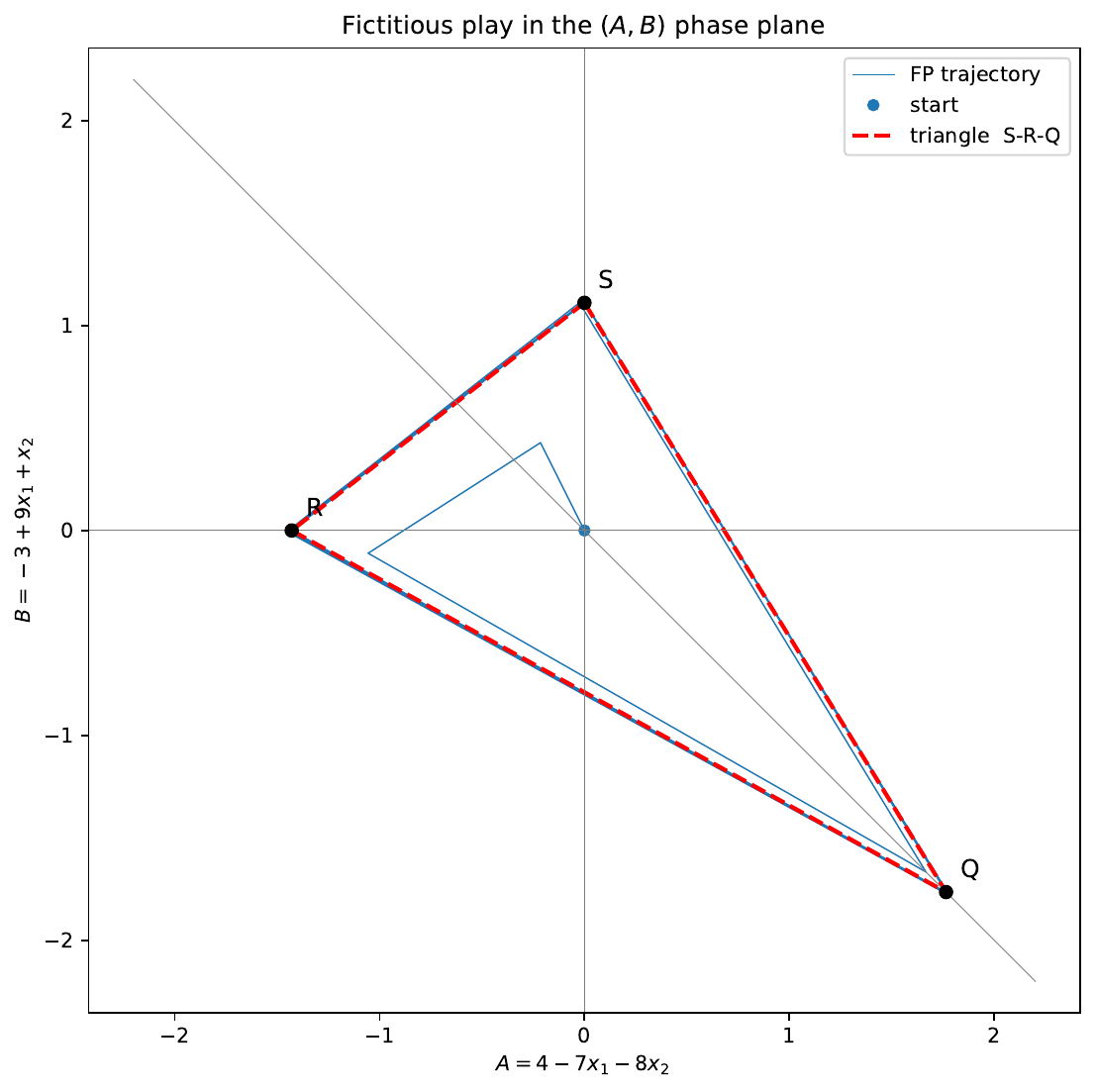}\\[-2pt]
    {\small (b) starting from the unstable equilibrium}
  \end{minipage}
  \caption{Discrete fictitious play on the $3$-bid counterexample with the
  uniform-split rule, in the $(A,B)$ plane. From a generic start (a) the
  iterates spiral inward and from the equilibrium (b) they spiral outward,
  both converging to the  cycle $S\to R\to Q$ of
  Figure~\ref{fig:ce-theory}. 
}
  \label{fig:ce-sim}
\end{figure}

\section{Convergence under the modified rule}
\label{sec:main}

\emph{Index conventions.}
Throughout this section $i,j,k\in\{1,\dots,m\}$ index \emph{value types} --- type $j$ has value $\theta_j$ and probability $\rho_j$ --- while $a,\ell,p,q,r,s\in\{1,\dots,K\}$ index \emph{bids} on the grid, so $\beta_s$ is the $s$-th bid and $\ell_j\le r_j$ are the block endpoints of type $j$.
On the cumulative quantities $\Gamma,\Psi,\Lambda$ the subscript is a value type and the superscript a bid threshold, e.g.\ $\Gamma_j^\ell$; the aggregate $\Phi^\ell=\sum_k\rho_k\Gamma_k^\ell$ sums over types and carries only the bid superscript.
A star marks an equilibrium quantity ($\gamma^\star$, $\Gamma_j^{\star,s}$) and a hat a selector quantity ($\widehat\Gamma_j^\ell$), while a lower-case $c$ with a bid subscript ($c_s$) is a coefficient, not an index.
Finally $\Gamma_j^\ell$ is a \emph{cumulative} lower tail, whereas $\mu_s$ and $x_j(\beta_s)$ are point masses at $\beta_s$.

Throughout this section the tie-breaking rule is zero-on-tie (Definition~\ref{def-zero-on-tie}).
We write $x_j\in\simplex(\bidset)$ for the empirical policy of type $j$.
For a threshold $\ell\in\{1,\dots,K+1\}$ define the \emph{(strict) lower tail}
\begin{align}
  \Gamma_j^\ell(x)=\sum_{q<\ell}x_j(\beta_q),
\end{align}
so $\Gamma_j^1(x)=0$ and $\Gamma_j^{K+1}(x)=1$.
Under zero-on-tie a bid wins only against strictly lower bids, so the payoff of type $j$ from bid $\beta_\ell$ is
\begin{align}
  \phi_{j,\ell}(x)=(\theta_j-\beta_\ell)\,\Phi^\ell(x),
  \qquad
  \Phi^\ell(x)=\sum_{k=1}^m\rho_k\Gamma_k^\ell(x),
  \label{eq:zero-payoff}
\end{align}
and the active value of type $j$ is $w_j(x)=\max_{\ell}\phi_{j,\ell}(x)$; thus $\BR_j(x)=\{\beta_\ell:\phi_{j,\ell}(x)=w_j(x)\}$.
Continuous-time fictitious play is $\dot x_j=\alpha_j-x_j$ with $\alpha_j(t)\in\overline{\operatorname{conv}}\BR_j(x(t))$; writing $\widehat\Gamma_j^\ell(t)=\sum_{q<\ell}\alpha_j(\beta_q,t)$ for the selector tail, summation of the dynamics gives
\begin{align}
  \dot\Gamma_j^\ell(t)=\widehat\Gamma_j^\ell(t)-\Gamma_j^\ell(t)
  \qquad\text{for a.e. }t.
  \label{eq:gamma-dynamics}
\end{align}
Finally we write $\Psi_j^\ell(t)=\sum_{k\neq j}\rho_k\Gamma_k^\ell(t)=\Phi^\ell-\rho_j\Gamma_j^\ell$ for the environment faced by type $j$, and $\mu_s(t)=\Phi^{s+1}(x(t))-\Phi^s(x(t))=\sum_k\rho_k x_k(\beta_s,t)$ for the total mass placed at bid $\beta_s$.

\begin{assumption}[Admissibility]
\label{ass:admissible}
Each type bids strictly below its value: for every $j$ the policy $x_j$ and every selector $\alpha_j$ are supported on the admissible set $\bidset_j=\{\beta_\ell:\beta_\ell<\theta_j\}$.
Equivalently, the best-reply correspondence \eqref{eq:br} maximises over $\bidset_j$ rather than over all of $\bidset$.
\end{assumption}

Bidding at or above one's value is weakly dominated under zero-on-tie: it yields a payoff $\le0$, whereas some bid below the value yields $\ge0$.
Assumption~\ref{ass:admissible} simply discards these dominated actions.
It is costless for our result --- the candidate equilibrium $\gamma^\star$ constructed below uses only bids strictly below each value --- but it is \emph{not} cosmetic:

\begin{remark}[Over-bidding equilibria]
\label{rem:overbid}
Without Assumption~\ref{ass:admissible} the conclusion of Theorem~\ref{thm:main} fails.
Take a single value type $\theta_1=\tfrac12$ ($m=1$, $\rho_1=1$) and bids $\bidset=\{0,1\}$.
Writing $x=(x_1,x_2)$ for the mass on $0$ and $1$, one has $\phi_{1,1}=(\tfrac12-0)\cdot0=0$ and $\phi_{1,2}=(\tfrac12-1)x_1=-\tfrac12 x_1$, so the construction below gives $\gamma^\star=\dirac{\beta_1}$.
Yet the constant profile $x(t)\equiv\dirac{\beta_2}$ solves the differential inclusion \eqref{eq:di}: there $\overline{\operatorname{conv}}\BR_1(x)-x=\simplex(\bidset)-\dirac{\beta_2}\ni0$, so $\dot x\equiv0$ is admissible, and this solution never reaches $\gamma^\star$.
Such rest points --- every bidder placing the same over-value bid and tying for a payoff of zero --- exist precisely because zero-on-tie makes over-bidding into a tie costless; they are weakly dominated and are removed by Assumption~\ref{ass:admissible}.
\end{remark}
\begin{lemma}[Monotonicity of best replies]
\label{lem:monotone-zero}
If $\beta_a<\beta_s\in\BR_i(x)$ and $i<j$, then $\phi_{j,s}(x)\ge\phi_{j,a}(x)$, with strict inequality when $\Phi^s(x)>\Phi^a(x)$.
\end{lemma}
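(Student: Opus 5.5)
The plan is a standard single-crossing argument on the product form \eqref{eq:zero-payoff}, $\phi_{j,\ell}(x)=(\theta_j-\beta_\ell)\Phi^\ell(x)$. We compare the payoff differences of types $i$ and $j$ between bids $\beta_s$ and $\beta_a$. Since $\Phi^\ell$ is the same for every type, the difference $\phi_{j,s}-\phi_{j,a}$ depends on the type only through the value $\theta_j$. It is affine in $\theta_j$ with slope $\Phi^s-\Phi^a$, and this slope is nonnegative because $\Phi^\ell$ is nondecreasing in $\ell$ (it is a cumulative lower tail) and $a<s$.

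Concretely, we write
\begin{align*}
\phi_{j,s}(x)-\phi_{j,a}(x)
&=\bigl[\phi_{i,s}(x)-\phi_{i,a}(x)\bigr]+(\theta_j-\theta_i)\bigl(\Phi^s(x)-\Phi^a(x)\bigr).
\end{align*}
This identity holds because the terms $-\beta_s\Phi^s$ and $-\beta_a\Phi^a$ cancel between the two types. The bracketed term is $\ge0$ because $\beta_s\in\BR_i(x)$, so $\phi_{i,s}(x)=w_i(x)\ge\phi_{i,a}(x)$. This uses that $\beta_a$ is an admissible comparison bid for type $i$: under Assumption~\ref{ass:admissible} the best reply maximises over $\bidset_i$, and $\beta_a<\beta_s<\theta_i$ guarantees $\beta_a\in\bidset_i$. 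The second term is $\ge0$ since $\theta_j>\theta_i$ for $i<j$ and $\Phi^s\ge\Phi^a$. Summing gives $\phi_{j,s}\ge\phi_{j,a}$. If $\Phi^s(x)>\Phi^a(x)$, the second term is strictly positive, because $\theta_j-\theta_i>0$ by the strict ordering of the values, and strict inequality follows.

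The main point needing care is not the algebra but the admissibility bookkeeping. We must check that $\beta_a$ is a competitor of $\beta_s$ in type $i$'s maximisation, which is immediate as above. We should also note that the statement concerns only payoffs, so it does not matter whether $\beta_s$ or $\beta_a$ is admissible for $j$. Indeed $\bidset_i\subseteq\bidset_j$ since $\theta_i<\theta_j$, so both bids lie in $\bidset_j$ anyway. No other earlier result is needed.
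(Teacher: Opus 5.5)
Your proof is correct and is essentially the paper's argument: you use the same identity $\phi_{j,s}-\phi_{j,a}=\bigl[\phi_{i,s}-\phi_{i,a}\bigr]+(\theta_j-\theta_i)(\Phi^s-\Phi^a)$, with the bracket nonnegative by optimality of $\beta_s$ for type $i$ and the second term nonnegative, and strictly positive when $\Phi^s>\Phi^a$. Your extra check that $\beta_a\in\bidset_i$ under Assumption~\ref{ass:admissible} is a harmless detail that the paper leaves implicit.
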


\begin{proof}
Since $\phi_{j,\ell}=(\theta_j-\beta_\ell)\Phi^\ell$,
\begin{align*}
  \phi_{j,s}(x)-\phi_{j,a}(x)
  =\bigl[\phi_{i,s}(x)-\phi_{i,a}(x)\bigr]+(\theta_j-\theta_i)\bigl(\Phi^s(x)-\Phi^a(x)\bigr).
\end{align*}
The bracket is $\ge0$ because $\beta_s\in\BR_i(x)$; the second term is $\ge0$ because $\theta_j>\theta_i$ and $\Phi^s\ge\Phi^a$ (as $a<s$), and is strictly positive when $\Phi^s>\Phi^a$.
\end{proof}
\subsection{The candidate equilibrium}

We define a putative equilibrium recursively.
The lowest type is pure at the highest bid below $\theta_1$,
\begin{align*}
  \gamma_1^\star=\dirac{\beta_{r_1}},\qquad r_1=\max\{\ell:\beta_\ell<\theta_1\}.
\end{align*}
We also set $\ell_1=r_1$, $S_1={\beta_{r_1}}$, and $\pi_1^\star=0$.
Suppose $\gamma_1^\star,\dots,\gamma_{j-1}^\star$ are constructed.
Write $\Gamma_k^{\star,s}=\Gamma_k^s(\gamma^\star)$ and define the lower-type contribution
\begin{align}
  \Lambda_j^\ell=\sum_{k<j}\rho_k\Gamma_k^{\star,\ell}.
\end{align}
Take $\ell_j$ to be the smallest maximizer of $(\theta_j-\beta_\ell)\Lambda_j^\ell$ over $\{\ell:\beta_\ell<\theta_j\}$, and set $\pi_j^\star=(\theta_j-\beta_{\ell_j})\Lambda_j^{\ell_j}$.
The remaining masses of $\gamma_j^\star$ are chosen so that all bids in a consecutive block $S_j=\{\beta_{\ell_j},\dots,\beta_{r_j}\}$ yield the same payoff $\pi_j^\star$:
\begin{align}
  (\theta_j-\beta_s)\bigl(\Lambda_j^s+\rho_j\Gamma_j^{\star,s}\bigr)=\pi_j^\star,
  \qquad s=\ell_j,\dots,r_j,
  \label{eq:support-equality}
\end{align}
where $r_j$ is the largest index for which \eqref{eq:support-equality} admits a solution $\Gamma_j^{\star,s}\in[0,1]$ with $\beta_s<\theta_j$; below the block $\Gamma_j^{\star,s}=0$ and above it $\Gamma_j^{\star,s}=1$.
From Lemma~\ref{lem:monotone-zero} the construction produces weakly ordered support blocks,
\begin{align}
  r_j\le \ell_{j+1},\qquad j=1,\dots,m-1.
  \label{eq:block-order}
\end{align}

\begin{lemma}
\label{lem:gamma-nash}
The profile $\gamma^\star$ is a Nash equilibrium of the zero-on-tie auction.
\end{lemma}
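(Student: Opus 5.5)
We need to show that every type $j$, facing the aggregate $\Phi^\ell(\gamma^\star)$, has all of $S_j$ among its best replies, i.e.\ $\phi_{j,\ell}(\gamma^\star)\le\pi_j^\star$ for all admissible $\ell$ with equality on $S_j$. The plan is to make the payoff $\phi_{j,\ell}(\gamma^\star)$ explicit via the block structure \eqref{eq:block-order} and then compare bid by bid.

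\textbf{Step 1 (explicit aggregate).} Write $\Phi^\ell(\gamma^\star)=\Lambda_j^\ell+\rho_j\Gamma_j^{\star,\ell}+\sum_{k>j}\rho_k\Gamma_k^{\star,\ell}$. By \eqref{eq:block-order}, for $\ell\le r_j$ every higher type $k>j$ has support at indices $\ge\ell_k\ge r_j\ge\ell$, so its strict lower tail vanishes: $\Gamma_k^{\star,\ell}=0$. Hence, for $\ell\le r_j$, $\phi_{j,\ell}(\gamma^\star)=(\theta_j-\beta_\ell)(\Lambda_j^\ell+\rho_j\Gamma_j^{\star,\ell})$. The strictness of the lower tail is what makes weak block overlap ($r_j=\ell_{j+1}$) harmless.

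\textbf{Step 2 (bids in and below the block).} For $\ell\in S_j$ the value is $\pi_j^\star$ by \eqref{eq:support-equality}. For $\ell<\ell_j$ we have $\Gamma_j^{\star,\ell}=0$, so the payoff equals $(\theta_j-\beta_\ell)\Lambda_j^\ell\le\pi_j^\star$, since $\ell_j$ maximizes this quantity.

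\textbf{Step 3 (bids above the block).} This is the main obstacle. Take admissible $\ell>r_j$. The payoff is $(\theta_j-\beta_\ell)\Phi^\ell$, with $\Phi^\ell=\Lambda_j^\ell+\rho_j+\sum_{k>j}\rho_k\Gamma_k^{\star,\ell}$, and the last term can now be positive.

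First, consider $\ell\le\ell_{j+1}$, where the higher-type term still vanishes. I would argue from the maximality of $r_j$. The map $s\mapsto \pi_j^\star/(\theta_j-\beta_s)-\Lambda_j^s$ defines the required $\rho_j\Gamma_j^{\star,s}$. If the block stops at $r_j$, then at $r_j+1$ the required tail exceeds $\rho_j$. Otherwise either the solution would exist, or $\beta_{r_j+1}\ge\theta_j$ and the bid is inadmissible; a negative requirement cannot occur, since the tail is monotone. So $(\theta_j-\beta_{r_j+1})(\Lambda_j^{r_j+1}+\rho_j)<\pi_j^\star$. I expect the construction implicitly means that the block stops exactly when the mass of type $j$ is exhausted. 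For further $\ell$ up to $\ell_{j+1}$, I would show that $(\theta_j-\beta_\ell)(\Lambda_j^\ell+\rho_j)$ stays below $\pi_j^\star$, using that $\Lambda_j^\ell$ is constant on gaps with no lower-type mass. If the required tail is nondecreasing, it only grows further beyond $\rho_j$.

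Second, consider $\ell$ beyond $\ell_{j+1}$. Here I would use Lemma~\ref{lem:monotone-zero} in the reverse direction. Higher types are indifferent on their blocks, and type $j$ has lower value, so for a lower type the higher bid is worse. Concretely, for $i>j$ with $\beta_\ell\in S_i$ and $a=\ell_{i}$,
\[
\phi_{j,\ell}-\phi_{j,a}=[\phi_{i,\ell}-\phi_{i,a}]-(\theta_i-\theta_j)(\Phi^\ell-\Phi^a)\le 0 .
\]
Chaining this down through the blocks $S_m,\dots,S_{j+1}$ reduces every such $\ell$ to an index at most $\ell_{j+1}$, which is already handled. This needs type $i$ to be indifferent on its block; the same identity with the maximality of $\ell_i$ handles gaps between blocks.

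Collecting Steps 1--3 gives $\BR_j(\gamma^\star)\supseteq S_j\supseteq\operatorname{supp}\gamma_j^\star$ for every $j$, so $\gamma^\star$ is a Nash equilibrium.
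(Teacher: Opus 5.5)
Your proposal follows essentially the same route as the paper's proof. The support equalities and the choice of $\ell_j$ handle bids on and below $S_j$, and the maximality of $r_j$ handles $\beta_{r_j+1}$. Bids in higher blocks are then reduced downward via the identity $\phi_{j,\ell}-\phi_{j,a}=[\phi_{i,\ell}-\phi_{i,a}]-(\theta_i-\theta_j)(\Phi^\ell-\Phi^a)$ together with the indifference of type $i$ on $S_i$; this is exactly the paper's displayed computation.

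One correction is needed in the chaining step. At a transition $\ell_i=r_{i-1}+1$, the input to that identity must be type $i-1$'s weak preference for $\beta_{r_{i-1}}$ over $\beta_{r_{i-1}+1}$, which is your own Step~3 estimate applied to type $i-1$. The maximality of $\ell_i$, which you cite, makes the bracket nonnegative and so gives the wrong sign. The same fix, using type $m$'s preference for $\beta_{r_m}$ over $\beta_{r_m+1}$, also covers admissible bids above the top block $r_m$, where $\Phi^\ell\equiv1$. Your chain through $S_m,\dots,S_{j+1}$ does not reach those bids.
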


\begin{proof}
Fix type $j$.
By construction every bid in $S_j$ gives payoff $\pi_j^\star$; by the choice of $\ell_j$ no lower bid does better.
If $r_j=\ell_{j+1}$, then, for $q\in S_{j+1}$
\begin{align*}
  (\theta_j-\beta_{r_j})\Phi^{\star,r_j} - (\theta_j-\beta_{q})\Phi^{\star,q}=(\Phi^{\star,r_j}-\Phi^{\star,q})(\theta_j-\beta_{r_j})+\Phi^{\star,q}(\beta_q-\beta_{r_j})\\
  \geq (\Phi^{\star,r_j}-\Phi^{\star,q})(\theta_{j+1}-\beta_{r_j})+\Phi^{\star,q}(\beta_q-\beta_{r_j})
  = 0. 
\end{align*}
If $r_j+1=\ell_{j+1}$, then by construction, $\Phi^{\star,r_{j+1}}(\theta_j-\beta_{r_{j+1}})\leq   (\theta_j-\beta_{r_j})\Phi^{\star,r_j}$ and the same reasoning applies. 
Hence by a simple recursion, no higher bid does strictly better.
By \eqref{eq:block-order} the higher types place no mass below $\beta_{r_j}$, so the payoff of type $j$ against $\gamma^\star$ is $(\theta_j-\beta_s)(\Lambda_j^s+\rho_j\Gamma_j^{\star,s})$ on $S_j$.
Hence every best reply of type $j$ lies in $S_j$ and $\gamma_j^\star$ is a best response; this holds for all $j$.
\end{proof}

\subsection{Auxiliary lemmas}

\begin{lemma}
\label{lem:flat-tie}
If $\beta_s\in\BR_j(x(t))$ for a.e.\ $t>T$, then for every $k>j$ and every $a<s$, $\alpha_k(\beta_a,t)=0$ for a.e.\ $t>T$.
\end{lemma}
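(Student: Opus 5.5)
The plan is to reduce the statement to Lemma~\ref{lem:monotone-zero}, applied pointwise in time. Then we argue that the only problematic times, where the strict inequality fails, form a set that does not matter.

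Fix $k>j$ and $a<s$. Let $\mathcal T\subseteq(T,\infty)$ be the full-measure set of times at which $\beta_s\in\BR_j(x(t))$, the selector $\alpha_k(t)$ lies in $\overline{\operatorname{conv}}\BR_k(x(t))$, and \eqref{eq:gamma-dynamics} holds. For $t\in\mathcal T$, Lemma~\ref{lem:monotone-zero} (with $i=j$, and $k$ in the role of the higher type) gives $\phi_{k,s}(x(t))\ge\phi_{k,a}(x(t))$. The inequality is strict whenever $\Phi^s(x(t))>\Phi^a(x(t))$. In that case $\beta_a\notin\BR_k(x(t))$, so $\alpha_k(\beta_a,t)=0$. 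It therefore remains to handle the set
\[
E=\{t\in\mathcal T:\ \Phi^s(x(t))=\Phi^a(x(t))\},
\]
that is, the times at which no type places mass on any bid $\beta_q$ with $a\le q<s$, i.e. $\mu_q(t)=0$ for all such $q$. This uses $\rho_k>0$; types with $\rho_k=0$ can be discarded.

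The key step, and the main obstacle, is to show that $\alpha_k(\beta_a,t)=0$ for a.e.\ $t\in E$. The idea is to use absolute continuity. The function $g(t)=\Phi^s(x(t))-\Phi^a(x(t))=\sum_{a\le q<s}\mu_q(t)$ is absolutely continuous and nonnegative. On $E$ it attains its minimum value $0$, so $\dot g(t)=0$ for a.e.\ $t\in E$; this is the standard fact that the derivative of an a.c.\ function vanishes a.e.\ on a level set. On the other hand, by \eqref{eq:gamma-dynamics},
\[
\dot g(t)=\sum_{i}\rho_i\bigl(\widehat\Gamma_i^s-\widehat\Gamma_i^a\bigr)(t)-g(t)=\sum_i\rho_i\sum_{a\le q<s}\alpha_i(\beta_q,t),
\]
because $g(t)=0$ on $E$. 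Every term in this sum is nonnegative. Hence for a.e.\ $t\in E$, every type assigns selector mass zero to $\{\beta_a,\dots,\beta_{s-1}\}$. In particular $\alpha_k(\beta_a,t)=0$, using $\rho_k>0$.

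Combining the two cases gives $\alpha_k(\beta_a,t)=0$ for a.e.\ $t>T$. A countable union over the finitely many pairs $(k,a)$ keeps the exceptional set null. The only delicate point is the measure-theoretic one: one must check that the exceptional sets from the best-reply selection, from \eqref{eq:gamma-dynamics}, and from the level-set argument are all null. Once the a.e.\ vanishing of $\dot g$ on $\{g=0\}$ is invoked, this is routine.
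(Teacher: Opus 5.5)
Your proof is correct and is essentially the paper's argument: the strict clause of Lemma~\ref{lem:monotone-zero} rules out $\beta_a\in\BR_k(x(t))$ unless $\Phi^s(x(t))=\Phi^a(x(t))$, and on that residual set the derivative of a nonnegative absolutely continuous function vanishes a.e.\ on its zero set, which together with $\dot x=\alpha-x$ forces $\alpha_k(\beta_a,t)=0$. The differences are minor: you apply the zero-set argument to the aggregate gap mass $g=\Phi^s-\Phi^a$, which gives the slightly stronger fact that a.e.\ on $E$ no type feeds any of $\beta_a,\dots,\beta_{s-1}$, whereas the paper applies it directly to $x_k(\beta_a,\cdot)$; you also make explicit the assumption $\rho_k>0$, which the paper uses tacitly.
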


\begin{proof}
Fix $k>j$ and $a<s$.
By Lemma~\ref{lem:monotone-zero}, $\phi_{k,s}\ge\phi_{k,a}$.
If $\alpha_k(\beta_a,t)>0$ then $\beta_a\in\BR_k(x(t))$, so $\phi_{k,a}\ge\phi_{k,s}$; hence $\phi_{k,s}=\phi_{k,a}$, and by the strict clause of Lemma~\ref{lem:monotone-zero} this forces $\Phi^s(x(t))=\Phi^a(x(t))$, i.e.\ no mass on $\{\beta_a,\dots,\beta_{s-1}\}$; in particular $x_k(\beta_a,t)=0$.

As $t\mapsto x_k(\beta_a,t)$ is absolutely continuous and nonnegative, it has zero derivative a.e.\ on its zero set, so $\dot x_k(\beta_a,t)=0$; with $\dot x_k(\beta_a,t)=\alpha_k(\beta_a,t)-x_k(\beta_a,t)$ this gives $\alpha_k(\beta_a,t)=0$.
\end{proof}

\begin{lemma}
\label{lem:gamma0}
Under the hypothesis of Lemma~\ref{lem:flat-tie}, $\Gamma_k^s(t)\to0$ for every $k>j$.
\end{lemma}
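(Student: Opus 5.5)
Under the hypothesis of Lemma~\ref{lem:flat-tie} we have $\beta_s\in\BR_j(x(t))$ for a.e.\ $t>T$. Fix $k>j$. The plan is to use the tail dynamics \eqref{eq:gamma-dynamics} for the threshold $\ell=s$, namely
\begin{align*}
  \dot\Gamma_k^s(t)=\widehat\Gamma_k^s(t)-\Gamma_k^s(t)\qquad\text{for a.e. }t,
\end{align*}
and to show that the selector tail $\widehat\Gamma_k^s$ vanishes almost everywhere after $T$. Once that is established, the equation reduces to the linear ODE $\dot\Gamma_k^s=-\Gamma_k^s$ on $(T,\infty)$.

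The key step is to invoke Lemma~\ref{lem:flat-tie} once for each bid index $a<s$. For each such $a$ it gives $\alpha_k(\beta_a,t)=0$ for a.e.\ $t>T$. There are only finitely many such indices, so the union of the exceptional null sets is still null. Therefore
\begin{align*}
  \widehat\Gamma_k^s(t)=\sum_{a<s}\alpha_k(\beta_a,t)=0\qquad\text{for a.e. }t>T.
\end{align*}

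Next we integrate. The map $t\mapsto\Gamma_k^s(t)$ is a finite sum of coordinates of the absolutely continuous trajectory $x_k$, so it is itself absolutely continuous. Since $\dot\Gamma_k^s=-\Gamma_k^s$ holds a.e.\ on $(T,\infty)$, the function $e^{t}\Gamma_k^s(t)$ is absolutely continuous with zero derivative a.e., and hence constant. This gives
\begin{align*}
  \Gamma_k^s(t)=e^{-(t-T)}\Gamma_k^s(T)\le e^{-(t-T)}\to0 .
\end{align*}
The decay is exponential, and the rate is uniform over $k>j$ and over initial conditions.

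No step here is delicate, since all the substantive work was done in Lemma~\ref{lem:flat-tie}. The only point needing care is the measure-theoretic bookkeeping. The identity $e^{t}\Gamma_k^s(t)=\text{const}$ holds because an absolutely continuous function with a.e.-zero derivative is constant. And the finitely many null sets, one per index $a<s$ and per type $k$, can be combined into a single null set.
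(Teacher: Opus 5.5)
Your proof is correct and follows the paper's argument: Lemma~\ref{lem:flat-tie} gives $\widehat\Gamma_k^s=0$ a.e.\ after $T$, and \eqref{eq:gamma-dynamics} then reduces to $\dot\Gamma_k^s=-\Gamma_k^s$, which forces exponential decay. Your extra bookkeeping (combining finitely many null sets, and the constancy of $e^{t}\Gamma_k^s$ for an absolutely continuous function) only makes explicit what the paper's two-line proof leaves implicit.
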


\begin{proof}
By Lemma~\ref{lem:flat-tie}, $\widehat\Gamma_k^s(t)=\sum_{a<s}\alpha_k(\beta_a,t)=0$ for a.e.\ $t>T$; then \eqref{eq:gamma-dynamics} gives $\dot\Gamma_k^s=-\Gamma_k^s$, so $\Gamma_k^s(t)\to0$.
\end{proof}

\begin{lemma}
\label{lem:lower-endpoint}
Fix $j$ with $\pi_j^\star>0$ (so $\ell_j$ is the \emph{strict} maximizer of $(\theta_j-\beta_\ell)\Lambda_j^\ell$) and suppose $x_k(t)\to\gamma_k^\star$ for every $k<j$.
Then, for all $t$ large, $\beta_{\ell_j}\in\BR_j(x(t))$ and $\alpha_j(\beta_a,t)=0$ for every $a<\ell_j$.
Consequently $\Gamma_j^{\ell_j}(t)\to0$, and $\Gamma_k^{\ell_j}(t)\to0$ for every $k>j$.
\end{lemma}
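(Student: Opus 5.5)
The plan is to prove the ``no mass below $\beta_{\ell_j}$'' part first, by an upward induction on the bid index $a=1,\dots,\ell_j-1$, carried out simultaneously for type $j$ and all higher types $k>j$. The two ``consequently'' statements then follow from \eqref{eq:gamma-dynamics}, exactly as in Lemma~\ref{lem:gamma0}. The starting point is a decomposition of the payoff of any type $k\ge j$. Write $\Phi^\ell(x(t))=\Lambda_j^\ell(t)+E_j^\ell(t)$, where
\[
\Lambda_j^\ell(t)=\sum_{i<j}\rho_i\Gamma_i^\ell(x(t))\longrightarrow\Lambda_j^\ell
\]
by the hypothesis $x_i\to\gamma_i^\star$ for $i<j$, and $E_j^\ell(t)=\sum_{i\ge j}\rho_i\Gamma_i^\ell(t)$ is the mass that types $\ge j$ place strictly below $\beta_\ell$.

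\emph{Induction hypothesis at stage $a$.} There is a time $T_a$ such that, for every type $k\ge j$ and every bid index $a'<a$, we have $\alpha_k(\beta_{a'},t)=0$ for a.e.\ $t>T_a$. The base case $a=1$ is vacuous. Under this hypothesis, \eqref{eq:gamma-dynamics} gives $\dot\Gamma_k^a=-\Gamma_k^a$ for every $k\ge j$, so $E_j^a(t)\to0$ exponentially. Hence $\phi_{j,a}(t)\to(\theta_j-\beta_a)\Lambda_j^a$. Since $\pi_j^\star>0$ and $\ell_j$ is the \emph{strict} maximiser, this limit is at most $\pi_j^\star-\delta$ for some $\delta>0$. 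On the other hand,
\[
\phi_{j,\ell_j}(t)\ge(\theta_j-\beta_{\ell_j})\Lambda_j^{\ell_j}(t)\longrightarrow\pi_j^\star,
\]
because $E_j^{\ell_j}\ge0$. So for $t$ large, $\beta_a\notin\BR_j(x(t))$, and therefore $\alpha_j(\beta_a,t)=0$. Taking the maximum over the finitely many $a'\le a$, we get that for $t$ large every best reply of $j$ is a bid $\beta_s$ with $s>a$.

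\emph{The step for $k>j$.} Pick such an $s$ with $\beta_s\in\BR_j(x(t))$. By Lemma~\ref{lem:monotone-zero}, $\phi_{k,s}\ge\phi_{k,a}$. We then reproduce the argument of Lemma~\ref{lem:flat-tie}. If $\alpha_k(\beta_a,t)>0$, the two payoffs are equal, so the strict clause of Lemma~\ref{lem:monotone-zero} forces $\Phi^s=\Phi^a$, and in particular $x_k(\beta_a,t)=0$. Because $x_k(\beta_a,\cdot)$ is absolutely continuous, this gives $\alpha_k(\beta_a,t)=0$ a.e. Here Lemma~\ref{lem:flat-tie} is used pointwise in $t$ and not with a fixed $s$: the bid $s$ may vary with $t$, but Lemma~\ref{lem:monotone-zero} only needs \emph{some} best reply of $j$ above $a$ at each time. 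This closes the induction at $a+1$.

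After $\ell_j-1$ steps we have $\alpha_k(\beta_a,t)=0$ for all $a<\ell_j$, all $k\ge j$, and a.e.\ large $t$. Hence $\widehat\Gamma_k^{\ell_j}=0$, so $\Gamma_k^{\ell_j}(t)\to0$ for every $k\ge j$, which gives both consequences. The same limits yield $\phi_{j,\ell_j}(t)\to\pi_j^\star$, because $E_j^{\ell_j}\to0$.

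\emph{Expected obstacle: the first claim.} The remaining claim, that $\beta_{\ell_j}$ itself belongs to $\BR_j(x(t))$ for $t$ large, is the step I expect to be hard. A bid $s>\ell_j$ could in principle do better, because it also beats mass of types $\ge j$ lying in $[\beta_{\ell_j},\beta_s)$. My first attempt is a contradiction argument on the interval of times where $\beta_{\ell_j}\notin\BR_j$. On such times type $j$ selects only bids above $\ell_j$, so $\Gamma_j^{\ell_j+1}$ decays. Combining this with the fact that higher types cannot select bids below a best reply of $j$ (Lemma~\ref{lem:monotone-zero}), the extra mass captured by bids $s>\ell_j$ should shrink. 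Then $\phi_{j,s}(t)$ would approach $(\theta_j-\beta_s)\Lambda_j^s$, plus a term from types $\ge j$ that the construction \eqref{eq:support-equality} of $\Gamma_j^{\star}$ keeps at most $\pi_j^\star$, and $\beta_{\ell_j}$ would re-enter the best-reply set. Making this rigorous needs an absorbing-set argument. If it fails, I would weaken the first claim to ``$\beta_a\notin\BR_j(x(t))$ for $a<\ell_j$'', which is all the downstream statements actually use.
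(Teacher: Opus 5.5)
Your upward induction on the bid index is correct and proves the second half of the lemma: for large $t$ no type $k\ge j$ puts selector mass below $\beta_{\ell_j}$, so $\Gamma_j^{\ell_j}\to0$ and $\Gamma_k^{\ell_j}\to0$ for $k>j$. Your pointwise use of Lemma~\ref{lem:monotone-zero} is more careful than the paper: at each time some best reply of $j$ lies above $\beta_a$, and you finish with the a.e.\ argument on the zero set of $x_k(\beta_a,\cdot)$. The paper instead invokes Lemma~\ref{lem:gamma0} at a bid $\beta_p$ that is only known to be optimal at arbitrarily large times.

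The gap is the first claim, $\beta_{\ell_j}\in\BR_j(x(t))$ for \emph{all} large $t$. It is part of the statement, and your sketch contains no estimate for it. Your fallback fails, because later results need the full claim and not just non-optimality of lower bids. Lemma~\ref{lem:gap} uses $\phi_{j,\ell}=w_j$ to get $D_{j,s}\le0$, and Lemmas~\ref{lem:coincide} and~\ref{lem:persist} rest on that sign. The fill-in induction of Lemma~\ref{lem:rconv} also starts from this claim. A telling sign: you never use strict maximality of $\ell_j$ against bids \emph{above} it (below it, ``smallest maximizer'' already suffices). Yet that is exactly what must stop a higher bid from overtaking $\beta_{\ell_j}$. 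The paper uses strictness in a different route. It takes $\beta_p$ as the smallest bid optimal at arbitrarily large times and passes to the limit along those times. It excludes $p>\ell_j$ by strictness and $p<\ell_j$ by the $\liminf$ bound on $w_j$. This gives that $\beta_{\ell_j}$ is optimal at arbitrarily large times. But the paper then reads $p=\ell_j$ as ``for $t$ large'' without justification, so copying it would not close your gap.

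What is missing is a persistence argument, and it can be built from your own ingredients.
\begin{itemize}
\item Let $M(t)=w_j(x(t))-\phi_{j,\ell_j}(x(t))\ge0$, which is Lipschitz. Once bids below $\ell_j$ are never optimal, $M=\max_{s\ge\ell_j}D_{j,s}$. So at a.e.\ $t$ with $M(t)>0$ one has $\dot M=\dot D_{j,s^*}$, where $\beta_{s^*}$ is the smallest element of $\BR_j(x(t))$ and $s^*>\ell_j$.
\item At such times no type $k\ge j$ feeds bids below $\beta_{s^*}$: type $j$ by minimality of $s^*$, types $k>j$ by your pointwise monotonicity argument. Hence $\dot M=-(\theta_j-\beta_{s^*})E_j^{s^*}+(\theta_j-\beta_{\ell_j})E_j^{\ell_j}+R(t)$, where $R$ is a combination of the lower-type velocities $\dot\Gamma_k^{s^*},\dot\Gamma_k^{\ell_j}$, $k<j$.
\item Meanwhile $M>0$ gives $\phi_{j,s^*}>\phi_{j,\ell_j}\ge(\theta_j-\beta_{\ell_j})\Lambda_j^{\ell_j}(t)$. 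Letting the lower-type tails converge, this forces $(\theta_j-\beta_{s^*})E_j^{s^*}\ge\eta-o(1)$. Here $\eta=\min_{s>\ell_j,\ \beta_s<\theta_j}\bigl[\pi_j^\star-(\theta_j-\beta_s)\Lambda_j^s\bigr]>0$ by strict maximality.
\item Since $E_j^{\ell_j}\to0$, you get $\dot M\le-\eta/2$ a.e.\ on $\{M>0\}$ for large $t$, provided $R\to0$. Then $M$ hits zero in finite time and can never become positive again.
\end{itemize}
The condition $R\to0$ is the velocity clause $\dot x_k\to0$, $k<j$, of $H(j)$. It holds wherever the lemma is applied (Corollary~\ref{cor:endpoints}). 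It is not among the lemma's stated hypotheses, and with position convergence alone the lower-type terms in $\dot M$ are not controlled.
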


\begin{proof}
Let $\beta_p$ be the smallest bid that lies in $\BR_j(x(t))$ for arbitrarily large $t$.
Bids below $\beta_p$ are not optimal for $t$ large, so $\alpha_j(\beta_a,t)=0$ for $a<p$ and thus $\Gamma_j^{p}(t)\to0$; by Lemma~\ref{lem:gamma0} (applicable at $\beta_p$) $\Gamma_k^{p}(t)\to0$ for $k>j$.

Since no mass sits strictly below $\beta_p$ --- neither from type $j$ ($\Gamma_j^p\to0$) nor from the higher types ($\Gamma_k^p\to0$, $k>j$) --- only the lower types contribute, so $\Phi^b(t)\to\sum_{k<j}\rho_k\Gamma_k^{\star,b}=\Lambda_j^b$ and hence $\phi_{j,b}(x(t))\to(\theta_j-\beta_b)\Lambda_j^b$ for every $b\le p$.
As $\beta_p$ is optimal at arbitrarily large times, $\phi_{j,p}=w_j\ge\phi_{j,b}$ there for every $b$; passing to the limit over such times gives $(\theta_j-\beta_p)\Lambda_j^p\ge(\theta_j-\beta_b)\Lambda_j^b$ for all $b\le p$.
Because $\ell_j$ is the strict maximizer, $p>\ell_j$ would violate this (taking $b=\ell_j$), while $p<\ell_j$ is excluded since $w_j\ge\phi_{j,\ell_j}$ together with $\liminf_t\Phi^{\ell_j}\ge\sum_{k<j}\rho_k\Gamma_k^{\star,\ell_j}=\Lambda_j^{\ell_j}$ forces $\liminf_t w_j\ge(\theta_j-\beta_{\ell_j})\Lambda_j^{\ell_j}=\pi_j^\star$, above the limit $(\theta_j-\beta_p)\Lambda_j^p$ of $w_j$ along those times.
Hence $p=\ell_j$, i.e.\ $\beta_{\ell_j}\in\BR_j(x(t))$ for $t$ large.

The remaining claims follow.
\end{proof}

The induction runs on the types in increasing order, under the hypothesis
\begin{align}
  H(j):\qquad x_k(t)\to\gamma_k^\star\ \text{ and }\ \dot x_k(t)\to0
  \quad\text{for every }k<j.
  \tag{$H(j)$}
  \label{eq:IH}
\end{align}
Fix $j$ and assume $H(j)$; abbreviate $\ell=\ell_j$, $r=r_j$, $\rho=\rho_j$, and $\delta=1-\Gamma_j^{\star,r}=\gamma_j^\star(\beta_r)>0$, so $\Gamma_j^{\star,s}\le1-\delta$ for $s\le r$.

\begin{corollary}
\label{cor:endpoints}
Under $H(j)$: for $t$ large $\beta_\ell\in\BR_j(x(t))$ and $\widehat\Gamma_j^\ell(t)=0$, hence $\Gamma_j^\ell(t)\to0$; and $\Gamma_k^\ell(t)\to0$ for every $k>j$.
\end{corollary}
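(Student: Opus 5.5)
The plan is to read Corollary~\ref{cor:endpoints} as Lemma~\ref{lem:lower-endpoint} specialised to the hypothesis $H(j)$, and to separately treat the degenerate case $\pi_j^\star=0$, which that lemma excludes.

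\emph{Case $\pi_j^\star>0$.} Lemma~\ref{lem:lower-endpoint} needs two inputs: $x_k(t)\to\gamma_k^\star$ for every $k<j$, and the strict maximality of $\ell_j$. The first is the first half of $H(j)$. The second holds because the lemma's proof uses only the following fact: any $\ell$ with $(\theta_j-\beta_\ell)\Lambda_j^\ell=\pi_j^\star$ and $\ell>\ell_j$ is ruled out. If such ties occur, I would handle them by reapplying the same limiting argument. Lemma~\ref{lem:lower-endpoint} then gives the conclusions directly: for $t$ large, $\beta_\ell\in\BR_j(x(t))$ and $\alpha_j(\beta_a,t)=0$ for all $a<\ell$. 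Hence $\widehat\Gamma_j^\ell(t)=\sum_{a<\ell}\alpha_j(\beta_a,t)=0$. By \eqref{eq:gamma-dynamics}, $\dot\Gamma_j^\ell=-\Gamma_j^\ell$, so $\Gamma_j^\ell(t)\to0$ exponentially. For $k>j$, the property $\beta_\ell\in\BR_j(x(t))$ for a.e.\ $t>T$ is exactly the hypothesis of Lemma~\ref{lem:flat-tie} with $s=\ell$. Lemma~\ref{lem:gamma0} then gives $\Gamma_k^\ell(t)\to0$.

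\emph{Case $\pi_j^\star=0$.} This means $\Lambda_j^\ell=0$ for every admissible $\ell$ with $\beta_\ell<\theta_j$. The smallest maximiser is then $\ell_j=1$ or, more generally, the smallest admissible index. Here $\Gamma_j^{\ell}$ with $\ell=1$ is identically $0$, because $\Gamma_j^1\equiv0$. Likewise $\widehat\Gamma_j^1\equiv0$ and $\Gamma_k^1\equiv0$. The claim $\beta_1\in\BR_j$ needs a small argument. If $j=1$, then $\ell_1=r_1$ by definition rather than via the maximiser rule, so I would check that case separately. Type $1$ has $\Lambda_1\equiv0$. Under Assumption~\ref{ass:admissible}, the bid $\beta_{r_1}$ is its only nonnegative candidate for leadership once lower bids are dominated by Lemma~\ref{lem:monotone-zero}. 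I expect the corollary is intended mainly for $\pi_j^\star>0$, and I would note that the trivial cases follow from the definitions.

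The main obstacle is the non-strict-maximiser subtlety, together with the convention for $\ell_1$. The convergence statements themselves are immediate from Lemmas~\ref{lem:lower-endpoint}, \ref{lem:flat-tie} and~\ref{lem:gamma0}.
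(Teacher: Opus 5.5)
For $\pi_j^\star>0$ your argument is exactly the paper's proof. Lemma~\ref{lem:lower-endpoint} gives $\beta_\ell\in\BR_j(x(t))$ and $\alpha_j(\beta_a,t)=0$ for $a<\ell$. Hence $\widehat\Gamma_j^\ell=0$ and $\dot\Gamma_j^\ell=-\Gamma_j^\ell$ by \eqref{eq:gamma-dynamics}, and the higher-type claim is Lemma~\ref{lem:gamma0} at $\beta_\ell$. The extra material around it is wrong, though, and should be cut rather than completed.

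\textbf{Ties.} A tie between $\ell_j$ and some $\ell'>\ell_j$ cannot be broken by ``reapplying the same limiting argument''. That argument only yields $(\theta_j-\beta_p)\Lambda_j^p\ge(\theta_j-\beta_b)\Lambda_j^b$ for $b\le p$. This holds with equality at $p=\ell'$, $b=\ell_j$, so it cannot exclude $p=\ell'$. You should instead use the strictness that Lemma~\ref{lem:lower-endpoint} packages with $\pi_j^\star>0$.

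\textbf{The case $\pi_j^\star=0$.} This case does not follow from the definitions: for $j\ge2$ the clause $\beta_\ell\in\BR_j(x(t))$ is false in general. There $\ell_j=1$ and $\phi_{j,1}\equiv0$. Any admissible $\beta_s$ with $\Phi^s(x(t))>0$ earns a strictly positive payoff, and $x_k(\beta_a,t)\ge x_k(\beta_a,0)e^{-t}$ keeps such mass positive forever. For instance, take $\theta_1=0.45$, $\theta_2=0.5$ and bids $\{0,0.1,\dots,0.4\}$. Then $\gamma_1^\star=\dirac{0.4}$ and $\pi_2^\star=0$. Any positive initial mass of type $1$ at $0$ keeps $\beta_1=0$ out of $\BR_2(x(t))$ for every $t$, even though in this instance $H(2)$ is precisely the assertion of Lemma~\ref{lem:crowded}.

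\textbf{The case $j=1$.} Your sketch fails here as well. Lemma~\ref{lem:monotone-zero} compares payoff differences \emph{across} types and gives no dominance of lower bids within type $1$. Indeed, with $\theta_1=0.5$ and all mass below $0.4$ sitting at $0$, the bid $0.1$ earns four times what $\beta_{r_1}=0.4$ earns.

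The paper never invokes the corollary when $\pi_j^\star=0$; Lemma~\ref{lem:induction-step} sends that case to Lemma~\ref{lem:crowded}. So the statement is to be read, and proved, only for $\pi_j^\star>0$.
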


\begin{proof}
Lemma~\ref{lem:lower-endpoint} gives $\beta_\ell\in\BR_j(x(t))$ and $\alpha_j(\beta_a,t)=0$ for $a<\ell$, so $\widehat\Gamma_j^\ell=0$ and, by \eqref{eq:gamma-dynamics}, $\dot\Gamma_j^\ell=-\Gamma_j^\ell\to0$; the higher-type claim is Lemma~\ref{lem:gamma0} at $\beta_\ell$.
\end{proof}

\begin{lemma}
\label{lem:drift}
Under $H(j)$, for every $s$ with $\beta_s\in\BR_j(x(t))$ for all $t$ large, $\dot\Psi_j^s(t)\to0$; in particular $\dot\Psi_j^\ell(t)\to0$.
\end{lemma}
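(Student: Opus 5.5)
Since $\Psi_j^s=\sum_{k\neq j}\rho_k\Gamma_k^s$, I will split the derivative into lower-type and higher-type contributions,
\begin{align*}
  \dot\Psi_j^s(t)=\sum_{k<j}\rho_k\dot\Gamma_k^s(t)+\sum_{k>j}\rho_k\dot\Gamma_k^s(t),
\end{align*}
and treat the two sums separately.

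For the lower types, $H(j)$ gives $\dot x_k(t)\to0$ for every $k<j$. Since $\dot\Gamma_k^s=\sum_{q<s}\dot x_k(\beta_q)$ is a finite sum of coordinates of $\dot x_k$, the first sum tends to $0$ directly.

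For the higher types, I will use the hypothesis that $\beta_s\in\BR_j(x(t))$ for all large $t$, say $t>T$. Lemma~\ref{lem:flat-tie} then gives $\alpha_k(\beta_a,t)=0$ for a.e.\ $t>T$, for every $k>j$ and every $a<s$. Summing over $a<s$ yields $\widehat\Gamma_k^s(t)=0$ a.e., and \eqref{eq:gamma-dynamics} reduces to
\begin{align*}
  \dot\Gamma_k^s=-\Gamma_k^s \quad\text{for a.e. } t>T,
\end{align*}
so $\Gamma_k^s(t)=\Gamma_k^s(T)e^{-(t-T)}\to0$. This is exactly Lemma~\ref{lem:gamma0}. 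It follows that $\dot\Gamma_k^s=-\Gamma_k^s\to0$ for each $k>j$, and the second sum tends to $0$ as well.

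Combining the two pieces gives $\dot\Psi_j^s(t)\to0$ (a.e., which is the sense in which derivatives are meant throughout). For the particular case $s=\ell$, Corollary~\ref{cor:endpoints} shows that $\beta_\ell\in\BR_j(x(t))$ for all large $t$, so the general statement applies.

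The argument has no hard step. The only care needed is in the higher-type sum, where one must use that $\widehat\Gamma_k^s$ vanishes \emph{identically} a.e.\ after $T$, rather than merely $\Gamma_k^s\to0$. This is what Lemma~\ref{lem:flat-tie} provides, and it gives the explicit exponential formula for $\Gamma_k^s$ and hence convergence of the derivative, not just of the function.
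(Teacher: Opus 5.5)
Your proof is correct and follows the paper's argument essentially verbatim. You use the same split of $\dot\Psi_j^s$ into lower- and higher-type sums, the velocity clause of $H(j)$ for $k<j$, Lemma~\ref{lem:flat-tie} with \eqref{eq:gamma-dynamics} (i.e.\ Lemma~\ref{lem:gamma0}, which your explicit exponential formula makes concrete) to get $\dot\Gamma_k^s=-\Gamma_k^s\to0$ for $k>j$, and Corollary~\ref{cor:endpoints} for the case $s=\ell$.
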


\begin{proof}
Write $\dot\Psi_j^s=\sum_{k<j}\rho_k\dot\Gamma_k^s+\sum_{k>j}\rho_k\dot\Gamma_k^s$.
For $k<j$, $\dot\Gamma_k^s\to0$ by the velocity clause of $H(j)$.
For $k>j$, Lemma~\ref{lem:flat-tie} gives $\widehat\Gamma_k^s=0$ and Lemma~\ref{lem:gamma0} gives $\Gamma_k^s\to0$, so $\dot\Gamma_k^s=-\Gamma_k^s\to0$ by \eqref{eq:gamma-dynamics}.
The case $s=\ell$ uses Corollary~\ref{cor:endpoints}.
\end{proof}

\subsection{The equilibrium selector and the value-gap identity}

For $s\ge\ell$ with $\beta_s<\theta_j$ set $c_s=\dfrac{\theta_j-\beta_\ell}{\theta_j-\beta_s}$ (so $c_\ell=1$, and $s\mapsto c_s$ increasing), and
\begin{align}
  \widehat\Gamma_j^{\star,s}(t):=\Gamma_j^s(t)-c_s\,\Gamma_j^\ell(t)
  +\frac{c_s\,\dot\Psi_j^\ell(t)-\dot\Psi_j^s(t)}{\rho}.
  \label{eq:selector-cum}
\end{align}
This is the selector tail that keeps the indifference $\phi_{j,s}=\phi_{j,\ell}$ invariant: differentiating $(\theta_j-\beta_s)\Phi^s=(\theta_j-\beta_\ell)\Phi^\ell$ and substituting $\dot\Gamma_j^\ell=-\Gamma_j^\ell$ (Corollary~\ref{cor:endpoints}) and $\widehat\Gamma_j^s=\Gamma_j^s+\dot\Gamma_j^s$ yields \eqref{eq:selector-cum}.
Define the value gap $D_{j,s}(t)=\phi_{j,s}(x(t))-\phi_{j,\ell}(x(t))$.

\begin{lemma}[Value-gap identity]
\label{lem:gap}
Under $H(j)$ there is $T$ such that, for every $s$ with $\ell\le s$, $\beta_s<\theta_j$, and $t>T$: $D_{j,s}(t)\le0$, with equality iff $\beta_s\in\BR_j(x(t))$, and
\begin{align}
  \dot D_{j,s}(t)=\rho\,(\theta_j-\beta_s)\bigl(\widehat\Gamma_j^s(t)-\widehat\Gamma_j^{\star,s}(t)\bigr)
  \qquad\text{for a.e. }t>T.
  \label{eq:gap-id}
\end{align}
\end{lemma}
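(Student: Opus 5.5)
The lemma has two parts. The sign statement on $D_{j,s}$ comes from Corollary~\ref{cor:endpoints}. The identity \eqref{eq:gap-id} is a direct computation that mirrors the derivation of the selector \eqref{eq:selector-cum}.

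\emph{Sign statement.} By Corollary~\ref{cor:endpoints} there is $T$ such that $\beta_\ell\in\BR_j(x(t))$ for all $t>T$. Hence $\phi_{j,\ell}(x(t))=w_j(x(t))$ for $t>T$. For every admissible $s\ge\ell$ this gives $D_{j,s}(t)=\phi_{j,s}-w_j\le 0$. By the definition of $\BR_j$ (maximised over $\bidset_j$ under Assumption~\ref{ass:admissible}), equality holds exactly when $\beta_s\in\BR_j(x(t))$. I will fix this $T$ once and use it for all $s$; there are finitely many $s$, and $T$ does not depend on $s$ in any case.

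\emph{The derivative.} Write $\Phi^s=\Psi_j^s+\rho\Gamma_j^s$. Each $\Gamma_k^s$ is absolutely continuous, so $D_{j,s}=(\theta_j-\beta_s)\Phi^s-(\theta_j-\beta_\ell)\Phi^\ell$ is absolutely continuous. For a.e.\ $t>T$ it satisfies
\[
\dot D_{j,s}=(\theta_j-\beta_s)\bigl(\dot\Psi_j^s+\rho\dot\Gamma_j^s\bigr)-(\theta_j-\beta_\ell)\bigl(\dot\Psi_j^\ell+\rho\dot\Gamma_j^\ell\bigr).
\]
Next I substitute two facts:
\begin{itemize}
\item $\dot\Gamma_j^s=\widehat\Gamma_j^s-\Gamma_j^s$, from \eqref{eq:gamma-dynamics};
\item $\dot\Gamma_j^\ell=-\Gamma_j^\ell$ for $t>T$, which holds because $\widehat\Gamma_j^\ell=0$ there by Corollary~\ref{cor:endpoints}.
\end{itemize}
Then I factor out $\rho(\theta_j-\beta_s)$ and use $c_s=(\theta_j-\beta_\ell)/(\theta_j-\beta_s)$. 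This gives
\[
\dot D_{j,s}=\rho(\theta_j-\beta_s)\Bigl[\widehat\Gamma_j^s-\Gamma_j^s+c_s\Gamma_j^\ell+\tfrac{\dot\Psi_j^s-c_s\dot\Psi_j^\ell}{\rho}\Bigr].
\]
By \eqref{eq:selector-cum}, the bracket is exactly $\widehat\Gamma_j^s-\widehat\Gamma_j^{\star,s}$, which is \eqref{eq:gap-id}.

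\emph{Main obstacle.} The algebra is routine; the delicate part is bookkeeping of the time sets. Corollary~\ref{cor:endpoints} must supply a single $T$ after which $\widehat\Gamma_j^\ell=0$ holds for a.e.\ $t$, not merely asymptotically. I would check this by recalling the proof of Lemma~\ref{lem:lower-endpoint}: there $\alpha_j(\beta_a,t)=0$ for $a<\ell$ holds for all large $t$, because those bids are strictly suboptimal. The derivative formulas only hold off a null set, namely the union over finitely many $k$ and $s$ of the non-differentiability sets. That is why the identity is stated a.e.

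No convergence of $\dot\Psi$ (Lemma~\ref{lem:drift}) is needed here; only the exact definition of $\widehat\Gamma_j^{\star,s}$ enters.
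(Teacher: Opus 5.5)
Your proposal is correct and matches the paper's proof essentially line for line. The sign claim follows from $\beta_\ell\in\BR_j(x(t))$ via Corollary~\ref{cor:endpoints}. The identity follows by differentiating $D_{j,s}=(\theta_j-\beta_s)\Phi^s-(\theta_j-\beta_\ell)\Phi^\ell$, substituting \eqref{eq:gamma-dynamics} and $\widehat\Gamma_j^\ell=0$, and recognizing the remaining terms as $\rho(\theta_j-\beta_s)\widehat\Gamma_j^{\star,s}$ through $c_s(\theta_j-\beta_s)=\theta_j-\beta_\ell$. Your remarks on the a.e.\ time sets, and your observation that Lemma~\ref{lem:drift} is not needed here, are also accurate.
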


\begin{proof}
For $t$ large $\beta_\ell\in\BR_j$ (Corollary~\ref{cor:endpoints}), so $\phi_{j,\ell}=w_j$; hence $D_{j,s}=\phi_{j,s}-w_j\le0$, with equality iff $\beta_s$ is optimal.
For the identity, start from $D_{j,s}=(\theta_j-\beta_s)\Phi^s-(\theta_j-\beta_\ell)\Phi^\ell$ and compute:
\begin{align*}
  \dot D_{j,s}
  &=(\theta_j-\beta_s)\,\dot\Phi^s-(\theta_j-\beta_\ell)\,\dot\Phi^\ell \\
  \intertext{Substitute $\dot\Phi^s=\dot\Psi_j^s+\rho(\widehat\Gamma_j^s-\Gamma_j^s)$ from \eqref{eq:gamma-dynamics} (and the same at $\ell$), then use $\widehat\Gamma_j^\ell=0$ (Corollary~\ref{cor:endpoints}), so that $\dot\Phi^\ell=\dot\Psi_j^\ell-\rho\Gamma_j^\ell$:}
  &=(\theta_j-\beta_s)\bigl[\dot\Psi_j^s+\rho(\widehat\Gamma_j^s-\Gamma_j^s)\bigr]
    -(\theta_j-\beta_\ell)\bigl[\dot\Psi_j^\ell-\rho\Gamma_j^\ell\bigr] \\
  \intertext{Split off the term $\rho(\theta_j-\beta_s)\widehat\Gamma_j^s$ and collect the rest into a bracket:}
  &=\rho(\theta_j-\beta_s)\widehat\Gamma_j^s
    -\Bigl[\rho(\theta_j-\beta_s)\Gamma_j^s-\rho(\theta_j-\beta_\ell)\Gamma_j^\ell
    +(\theta_j-\beta_\ell)\dot\Psi_j^\ell-(\theta_j-\beta_s)\dot\Psi_j^s\Bigr] \\
  \intertext{The bracket is exactly $\rho(\theta_j-\beta_s)\widehat\Gamma_j^{\star,s}$: multiply \eqref{eq:selector-cum} by $\rho(\theta_j-\beta_s)$ and use $(\theta_j-\beta_s)c_s=\theta_j-\beta_\ell$. Therefore}
  &=\rho(\theta_j-\beta_s)\bigl(\widehat\Gamma_j^s-\widehat\Gamma_j^{\star,s}\bigr),
\end{align*}
which is \eqref{eq:gap-id}.
\end{proof}

\begin{lemma}
\label{lem:coincide}
Under $H(j)$, for a.e.\ $t$ large and every $s$ with $\beta_s\in\BR_j(x(t))$ and $\beta_s<\theta_j$, $\widehat\Gamma_j^s(t)=\widehat\Gamma_j^{\star,s}(t)$.
\end{lemma}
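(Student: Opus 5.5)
The plan is to use the value-gap identity of Lemma~\ref{lem:gap} together with the standard fact that an absolutely continuous function has zero derivative almost everywhere on its zero set. Take $T$ as in Lemma~\ref{lem:gap}. For $t>T$ and any $s$ with $\ell\le s$ and $\beta_s<\theta_j$, the function $D_{j,s}$ is $\le0$. It vanishes exactly at those times when $\beta_s\in\BR_j(x(t))$.

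\textbf{Regularity of $D_{j,s}$.} It is a finite combination of products $(\theta_j-\beta_s)\Phi^s(x(t))$, and $\Phi^s$ is linear in the absolutely continuous trajectory $x$. Hence $D_{j,s}$ is absolutely continuous, and its derivative is given for a.e.\ $t>T$ by \eqref{eq:gap-id}.

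\textbf{The case $s\ge\ell$.} Let $Z_s=\{t>T: D_{j,s}(t)=0\}$, which is the set of times at which $\beta_s$ is a best reply. Since $D_{j,s}$ is absolutely continuous and attains its maximum value $0$ on $Z_s$, we get $\dot D_{j,s}(t)=0$ for a.e.\ $t\in Z_s$. This is the same argument already used in the proof of Lemma~\ref{lem:flat-tie}: a.e.\ point of $Z_s$ is either an accumulation point of $Z_s$, where the derivative, if it exists, must be $0$, or belongs to the countable set of isolated points. Plugging into \eqref{eq:gap-id} and using $\rho>0$ and $\theta_j-\beta_s>0$ gives $\widehat\Gamma_j^s(t)=\widehat\Gamma_j^{\star,s}(t)$ for a.e.\ $t\in Z_s$. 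Only countably many $s$ are involved, so the exceptional null sets can be combined into a single null set.

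\textbf{The case $s<\ell$.} We must also cover best replies $\beta_s$ with $s<\ell$, since the statement does not restrict to $s\ge\ell$. By Corollary~\ref{cor:endpoints} (via Lemma~\ref{lem:lower-endpoint}), for $t$ large we have $\alpha_j(\beta_a,t)=0$ for every $a<\ell$. Moreover, in the proof of Lemma~\ref{lem:lower-endpoint}, $\beta_\ell$ is the smallest bid that is optimal at arbitrarily large times. Hence, after enlarging $T$, no $\beta_s$ with $s<\ell$ belongs to $\BR_j(x(t))$ for $t>T$, and the claim is vacuous there. The boundary case $s=\ell$ is immediate: $c_\ell=1$ gives $\widehat\Gamma_j^{\star,\ell}=0=\widehat\Gamma_j^\ell$.

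\textbf{Main obstacle.} The delicate step is the measure-theoretic justification that $\dot D_{j,s}=0$ almost everywhere on $Z_s$, and checking that \eqref{eq:gap-id} holds at those same times. The latter requires that Corollary~\ref{cor:endpoints}'s relation $\widehat\Gamma_j^\ell=0$ and the a.e.\ dynamics \eqref{eq:gamma-dynamics} hold off a common null set. I would handle this by intersecting the finitely many full-measure sets involved.
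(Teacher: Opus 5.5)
Your proof is correct and follows essentially the paper's own one-line argument: $-D_{j,s}$ is absolutely continuous, nonnegative, and vanishes exactly when $\beta_s$ is optimal, so its derivative is zero a.e.\ on that set, and the value-gap identity \eqref{eq:gap-id} with $\rho(\theta_j-\beta_s)>0$ gives the equality. Your separate treatment of $s<\ell$ (vacuous for large $t$ by the lower-endpoint argument) and of $s=\ell$ is a harmless extra check that the paper leaves implicit, since \eqref{eq:selector-cum} is only defined for $s\ge\ell$.
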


\begin{proof}
$z_s:=-D_{j,s}$ is absolutely continuous, nonnegative, and vanishes on $A_s=\{t:\beta_s\in\BR_j(x(t))\}$, so $\dot z_s=0$ a.e.\ on $A_s$; with \eqref{eq:gap-id} and $\theta_j-\beta_s>0$ this gives the claim.
\end{proof}

\begin{remark}
\label{rem:admissible}
On the active block, Lemma~\ref{lem:coincide} identifies $\alpha_j^\star$ (the selector with tail \eqref{eq:selector-cum}) with the genuine selector $\alpha_j$, so its increments are automatically nonnegative there; off the block \eqref{eq:selector-cum} is only an auxiliary quantity.
\end{remark}

\subsection{The active set}

This subsection shows that under $H(j)$ the best replies of type $j$ settle to the equilibrium block: $\BR_j(x(t))=\{\beta_\ell,\dots,\beta_r\}$ for $t$ large (Lemma~\ref{lem:rconv}).
The argument has two directions.
For $\supseteq$, every bid of the block becomes optimal (a fill-in induction) and then stays optimal (persistence).
For $\subseteq$, no bid above $\beta_r$ can survive, because the cumulative level that would make it optimal exceeds $1$ and is therefore unreachable.
Four tools feed this: Lemma~\ref{lem:anchor} fixes the limiting payoff $\pi_j^\star$ and environment $\Lambda_j^s$; Lemma~\ref{lem:optlimit} identifies the limit $g_s$ of $\Gamma_j^s$ at an optimal bid; Lemma~\ref{lem:persist} shows that optimality, once gained with room to spare, is retained; and Lemma~\ref{lem:unfed} shows that trapped bids receive no mass, so gaps close.

\begin{lemma}
\label{lem:anchor}
Under $H(j)$, $\Psi_j^s(t)\to\Lambda_j^s$ for every $s$ optimal for all $t$ large, and $\phi_{j,\ell}(x(t))\to\pi_j^\star$.
\end{lemma}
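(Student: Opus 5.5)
The plan is to split $\Psi_j^s=\sum_{k<j}\rho_k\Gamma_k^s+\sum_{k>j}\rho_k\Gamma_k^s$ and treat the two sums separately, exactly as in the proof of Lemma~\ref{lem:drift}. Then I pass to the limit in the payoff formula \eqref{eq:zero-payoff} at $s=\ell$.

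\textbf{Step 1 (lower types).} For $k<j$, the hypothesis $H(j)$ gives $x_k(t)\to\gamma_k^\star$. The map $x_k\mapsto\Gamma_k^s(x)$ is linear, hence continuous, so $\Gamma_k^s(t)\to\Gamma_k^{\star,s}$. Summing with weights $\rho_k$ yields $\sum_{k<j}\rho_k\Gamma_k^s(t)\to\Lambda_j^s$.

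\textbf{Step 2 (higher types).} Let $s$ be such that $\beta_s\in\BR_j(x(t))$ for all $t$ large. This is precisely the hypothesis of Lemma~\ref{lem:flat-tie}, so Lemma~\ref{lem:gamma0} applies and gives $\Gamma_k^s(t)\to0$ for every $k>j$. Combined with Step 1, this proves $\Psi_j^s(t)\to\Lambda_j^s$.

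\textbf{Step 3 (payoff at $\ell$).} By Corollary~\ref{cor:endpoints}, $\beta_\ell$ is optimal for all $t$ large, so Step 2 applies at $s=\ell$ and gives $\Psi_j^\ell\to\Lambda_j^\ell$. The same corollary gives $\Gamma_j^\ell(t)\to0$. Therefore
\[
\Phi^\ell=\Psi_j^\ell+\rho_j\Gamma_j^\ell\to\Lambda_j^\ell,
\]
and so
\[
\phi_{j,\ell}(x(t))=(\theta_j-\beta_\ell)\Phi^\ell(x(t))\to(\theta_j-\beta_\ell)\Lambda_j^\ell=\pi_j^\star,
\]
by the definition of $\pi_j^\star$ with $\ell=\ell_j$.

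\textbf{Degenerate case $\pi_j^\star=0$.} This is the only delicate point: Lemma~\ref{lem:lower-endpoint}, and hence Corollary~\ref{cor:endpoints}, was stated under the assumption $\pi_j^\star>0$. For $j=1$ the statement is trivial, since $\Lambda_1\equiv0$ and $\pi_1^\star=0$. For $j>1$ with $\pi_j^\star=0$, I would argue directly. Since $\Lambda_j^\ell=0$ whenever $\theta_j>\beta_\ell$, we need $\Phi^\ell(t)\to0$. The lower-type part of $\Phi^\ell$ tends to $\Lambda_j^\ell=0$ by Step 1. For the remaining part, $\Gamma_j^\ell$ and $\Gamma_k^\ell$ with $k>j$, I would use that bids below $\beta_\ell$ are eventually strictly suboptimal, or at worst tie at value zero. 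That case is handled by the same flat-tie mechanism as in Lemma~\ref{lem:flat-tie}. If it proves awkward, I would simply note that the lemma is invoked only within the standing assumptions under which Corollary~\ref{cor:endpoints} holds, and state this explicitly.
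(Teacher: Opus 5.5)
Your proposal is correct and is essentially the paper's proof: you split $\Psi_j^s$ into the lower-type sum, which converges to $\Lambda_j^s$ by $H(j)$, and the higher-type sum, which vanishes by Lemma~\ref{lem:gamma0}, and then evaluate at $s=\ell$ using Corollary~\ref{cor:endpoints}. Your degenerate-case paragraph is not needed, because the paper only invokes this lemma when $\pi_j^\star>0$ (Lemma~\ref{lem:induction-step} sends $\pi_j^\star=0$ to Lemma~\ref{lem:crowded}); moreover, for $j>1$ that case is immediate, since $\ell_j$ is then the lowest admissible index and $\Phi^{\ell_j}\equiv0$, whereas for $j=1$ it is not trivial as you claim, because the paper's convention $\ell_1=r_1$ makes $\phi_{1,\ell_1}\to0$ require $\Phi^{r_1}(t)\to0$.
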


\begin{proof}
$\Psi_j^s=\sum_{k<j}\rho_k\Gamma_k^s+\sum_{k>j}\rho_k\Gamma_k^s$; the first sum $\to\Lambda_j^s$ by $H(j)$, the second $\to0$ by Lemma~\ref{lem:gamma0}.
For $s=\ell$, $\phi_{j,\ell}=(\theta_j-\beta_\ell)(\Psi_j^\ell+\rho\Gamma_j^\ell)\to(\theta_j-\beta_\ell)\Lambda_j^\ell=\pi_j^\star$ (Corollary~\ref{cor:endpoints}).
\end{proof}
In what follows, $g_s$ is the cumulative level of $\Gamma_j^s$ at which $\beta_s$ would tie $\beta_\ell$ in payoff: a feasible level ($g_s\le1-\delta$) marks a bid inside the block, an infeasible one ($g_s>1$) a bid above it --- this is what pins the top of the active set at $r$.
\begin{lemma}
\label{lem:optlimit}
Under $H(j)$, let $\beta_s<\theta_j$ be optimal for a.e.\ $t$ large.
Then $\Gamma_j^s(t)\to g_s:=\tfrac1\rho\bigl(\tfrac{\pi_j^\star}{\theta_j-\beta_s}-\Lambda_j^s\bigr)$, with $g_s=\Gamma_j^{\star,s}\le1-\delta$ if $\ell\le s\le r$ and $g_s>1$ if $s>r$.
\end{lemma}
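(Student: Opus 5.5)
The plan has two parts: first pass to the limit in the tie equality that holds when $\beta_s$ is optimal, then compare the resulting level $g_s$ with the construction \eqref{eq:support-equality} of $\gamma^\star$.

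\emph{Convergence.} Since $\beta_s$ and $\beta_\ell$ are both optimal for a.e.\ $t$ large (Corollary~\ref{cor:endpoints}), we have $D_{j,s}(t)=0$ there. By continuity of $D_{j,s}$ this holds for all large $t$, so
\[
  (\theta_j-\beta_s)\bigl(\Psi_j^s(t)+\rho\,\Gamma_j^s(t)\bigr)=\phi_{j,\ell}(x(t)).
\]
Lemma~\ref{lem:anchor} gives $\Psi_j^s(t)\to\Lambda_j^s$ and $\phi_{j,\ell}(x(t))\to\pi_j^\star$. Solving for $\Gamma_j^s$, which is legitimate because $\theta_j-\beta_s>0$ and $\rho>0$, gives $\Gamma_j^s(t)\to g_s$. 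This step is routine.

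\emph{The case $\ell\le s\le r$.} By \eqref{eq:support-equality}, $\Gamma_j^{\star,s}$ solves $(\theta_j-\beta_s)(\Lambda_j^s+\rho\Gamma_j^{\star,s})=\pi_j^\star$. This is the same affine equation in the unknown as the one defining $g_s$, so $g_s=\Gamma_j^{\star,s}$. Moreover $\Gamma_j^{\star,s}\le\Gamma_j^{\star,r}=1-\delta$ by monotonicity of cumulative tails.

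\emph{The case $s>r$ (main obstacle).} Here the goal is to show $g_s>1$ using the maximality of $r$. Every $\Gamma_j^s(t)$ lies in $[0,1]$, so the limit satisfies $g_s\in[0,1]$. I would also check $g_s\ge\Gamma_j^{\star,r}$: since $\Gamma_j^s\ge\Gamma_j^r$ pointwise in $t$, this holds whenever $\Gamma_j^r$ converges to $\Gamma_j^{\star,r}$. If that is not yet known, I would instead use monotonicity in $s$ of the construction's solution.

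Suppose for contradiction that $g_s\le1$. Then the value $g_s\in[0,1]$ solves \eqref{eq:support-equality} at index $s$ with $\beta_s<\theta_j$. The delicate point is that \eqref{eq:support-equality} requires solvability for every index of a consecutive block, not just at $s$. I would handle the intermediate indices $r<s'<s$ as follows. Consider the function $s'\mapsto \frac{1}{\rho}\bigl(\frac{\pi_j^\star}{\theta_j-\beta_{s'}}-\Lambda_j^{s'}\bigr)$. I would argue it lies in $[\Gamma_j^{\star,r},g_s]\subseteq[0,1]$ and is nondecreasing along the block, so that it defines an admissible tail. This uses that $\Lambda_j^{s'}$ is constant between consecutive lower-type masses, together with the upper bound coming from optimality comparisons between $\beta_{s'}$ and $\beta_s$. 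If that monotonicity argument fails, my fallback is to read the construction's "largest index" as the largest $s$ for which the single equation is solvable, which is how the paper states it. Under that reading, solvability at $s>r$ contradicts maximality of $r$ directly.

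Either way, $g_s\le1$ is contradicted, so $g_s>1$. The strict inequality is the only subtle case. I would exclude $g_s=1$ by noting that then $\beta_s$ would belong to the block, with zero mass above it, which again contradicts maximality of $r$.
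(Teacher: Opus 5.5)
Your proposal is correct and follows the paper's proof. You pass to the limit in $(\theta_j-\beta_s)(\Psi_j^s+\rho\Gamma_j^s)=\phi_{j,\ell}$ via Lemma~\ref{lem:anchor}, identify $g_s=\Gamma_j^{\star,s}$ on the block from \eqref{eq:support-equality}, and use maximality of $r$ for $s>r$; your ``fallback'' single-equation reading of maximality is exactly the one the paper uses, so the block-reading detour is unnecessary. Because you take $g_s\ge0$ from $\Gamma_j^s(t)\in[0,1]$, your contradiction shows directly that the case $s>r$ cannot occur, which is exactly how Lemma~\ref{lem:rconv} uses it. The paper's assertion that ``the equalizing level is $>1$'' instead tacitly relies on $g_s\ge0$, i.e.\ on $\pi_j^\star\ge(\theta_j-\beta_s)\Lambda_j^s$.
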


\begin{proof}
Indifference $\phi_{j,s}=\phi_{j,\ell}$ gives $(\theta_j-\beta_s)(\Psi_j^s+\rho\Gamma_j^s)=\phi_{j,\ell}$; letting $t\to\infty$ with $\Psi_j^s\to\Lambda_j^s$, $\phi_{j,\ell}\to\pi_j^\star$ (Lemma~\ref{lem:anchor}) yields $\Gamma_j^s\to g_s$.
For $\ell\le s\le r$, $g_s$ solves \eqref{eq:support-equality}, so $g_s=\Gamma_j^{\star,s}\le1-\delta$; for $s>r$ with $\beta_s<\theta_j$, \eqref{eq:support-equality} has no solution in $[0,1]$ (maximality of $r$) and the equalizing level is $>1$.
\end{proof}

\begin{lemma}
\label{lem:persist}
Under $H(j)$, let $\beta_q<\theta_j$ with $\Psi_j^q(t)$ convergent.
For every $\eta>0$ there is $T$ such that: if $\beta_q\in\BR_j(x(t_0))$ for some $t_0\ge T$ with $\Gamma_j^q(t_0)\le1-\eta$, then $\beta_q\in\BR_j(x(t))$ and $\Gamma_j^q(t)<1$ for all $t\ge t_0$.
\end{lemma}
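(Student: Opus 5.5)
The plan is to work with the value gap $D_{j,q}=\phi_{j,q}-\phi_{j,\ell}$ and show that once it is $0$ it cannot become negative. By Lemma~\ref{lem:gap}, $D_{j,q}\le 0$ for $t>T$, with equality iff $\beta_q$ is optimal. Moreover
\[
\dot D_{j,q}=\rho(\theta_j-\beta_q)\bigl(\widehat\Gamma_j^q-\widehat\Gamma_j^{\star,q}\bigr).
\]
We therefore compare the actual selector tail $\widehat\Gamma_j^q$ with the equalizing tail $\widehat\Gamma_j^{\star,q}$ from \eqref{eq:selector-cum}. Whenever $\beta_q$ drops out of $\BR_j$, the selector places no mass at $\beta_q$. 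We must then show that the gap is pushed back toward zero, that is, $\dot D_{j,q}>0$ on the set where $D_{j,q}<0$, as long as $\Gamma_j^q$ stays bounded away from $1$.

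\textbf{Step 1.} Use the convergence hypotheses to make the drift terms small. By $H(j)$, Lemma~\ref{lem:drift} and Corollary~\ref{cor:endpoints}, we have $\dot\Psi_j^\ell\to 0$ and $\Gamma_j^\ell\to0$. For $\dot\Psi_j^q$, convergence of $\Psi_j^q$ alone is not enough. Instead, write $\Psi_j^q=\sum_{k<j}+\sum_{k>j}$: the lower types have vanishing velocity by $H(j)$, and for the higher types $\dot\Gamma_k^q=\widehat\Gamma_k^q-\Gamma_k^q$, which I would control via Lemma~\ref{lem:flat-tie} and Lemma~\ref{lem:gamma0}, applied at $\beta_\ell$ and propagated. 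Choose $T$ so that for $t\ge T$ the quantity $\bigl|c_q\dot\Psi_j^\ell-\dot\Psi_j^q\bigr|/\rho+c_q\Gamma_j^\ell$ is at most $\eta/4$. Then
\[
\widehat\Gamma_j^{\star,q}\le \Gamma_j^q+\eta/4.
\]

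\textbf{Step 2.} Take any time $t$ with $D_{j,q}(t)<0$, so $\beta_q\notin\BR_j$. I claim that every optimal bid lies strictly above $\beta_q$ and that $\beta_\ell$ is still optimal. Indeed, if some bid below $\beta_q$ other than $\beta_\ell$ were optimal, monotonicity of the argmax structure (Lemma~\ref{lem:monotone-zero} applied within a single type via concavity-type comparisons) would be needed. A cleaner route is to note that the selector mass strictly below $\beta_q$ is carried only by optimal bids. On an excursion where $\beta_q$ is not optimal but $\beta_\ell$ is, I expect to need $\widehat\Gamma_j^q\ge \Gamma_j^q+\eta/2$. This is the main obstacle: the selector may sit on $\beta_\ell$ (making $\widehat\Gamma_j^q=1$, which is good) or on bids above $\beta_q$ (making $\widehat\Gamma_j^q$ small, which is bad).

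\textbf{Step 3.} First I would try a contradiction argument on the first exit time. Let $t_1=\inf\{t>t_0: D_{j,q}(t)<0\}$. Just after $t_1$, continuity keeps $\Gamma_j^q\le 1-\eta/2$, and the gap is tiny. Payoffs are continuous in $x$, so the only bids that can become strictly better than $\beta_q$ near $t_1$ are those already tied with it at $t_1$. If a bid $\beta_p$ with $p>q$ overtook $\beta_q$ and absorbed all the mass, then $\Phi^p$ would grow relative to $\Phi^q$ only through mass at $[\beta_q,\beta_p)$. Such mass would make $\beta_p$ more attractive, but the mass at $\beta_q$ from type $j$ decays, and with $\Gamma_j^q<1$ the mass of type $j$ above $\beta_q$ satisfies $1-\Gamma_j^q\ge\eta/2$. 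The resulting change of $D_{j,q}$ can then be signed using $\dot\Phi^q$ versus $\dot\Phi^p$. Concretely, I would apply the gap identity to both $D_{j,q}$ and $D_{j,p}$ and subtract. This shows that $D_{j,q}-D_{j,p}$ has nonnegative derivative whenever the selector is supported above $\beta_q$, because $\widehat\Gamma_j^q=\widehat\Gamma_j^p=0$ there while $\Gamma_j^p\ge\Gamma_j^q$ weighted by $(\theta_j-\beta_p)<(\theta_j-\beta_q)$. This contradicts $\beta_p$ overtaking $\beta_q$.

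\textbf{Step 4.} Finally, $\Gamma_j^q(t)<1$ for all $t\ge t_0$: while $\beta_q$ is optimal, $\dot\Gamma_j^q=\widehat\Gamma_j^q-\Gamma_j^q$, and $\Gamma_j^q$ can reach $1$ only if all mass lies below $\beta_q$. The indifference with $\beta_\ell$ together with $\Psi_j^q$ convergent then bounds $\Gamma_j^q$ by its tie level, which is strictly below $1$ for $t$ large.
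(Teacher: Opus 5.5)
\textbf{Gap 1: the key estimate is missing.} Your proposal does not contain the estimate that drives the paper's proof.

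On any interval $[t_0,t]$ where $\beta_q$ stays optimal, Lemma~\ref{lem:coincide} pins $\dot\Gamma_j^q=-c_q\Gamma_j^\ell+\rho^{-1}(c_q\dot\Psi_j^\ell-\dot\Psi_j^q)$. The paper simply \emph{integrates} this. The increment $\Gamma_j^q(t)-\Gamma_j^q(t_0)$ is $-c_q\int_{t_0}^t\Gamma_j^\ell$ plus multiples of the increments of $\Psi_j^\ell$ and $\Psi_j^q$. For $t_0$ large these are small, by exponential decay of $\Gamma_j^\ell$ and the Cauchy property of the convergent $\Psi$'s. Combined with $\Gamma_j^q(t_0)\le1-\eta$, this gives $\Gamma_j^q\le1-\eta/2$ along the whole optimal stretch. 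The paper then uses \eqref{eq:selector-cum} to keep the equalizing tail $\widehat\Gamma_j^{\star,q}<1$, and closes with an open--closed continuation on $[t_0,\infty)$.

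You say instead that convergence of $\Psi_j^q$ ``alone is not enough'', and Step~1 looks for a pointwise bound on $\dot\Psi_j^q$. That bound is not available:
\begin{itemize}
\item Lemma~\ref{lem:flat-tie} at $\beta_\ell$ says nothing about higher-type selector mass on $[\beta_\ell,\beta_q)$, which can switch on and off.
\item Applying Lemma~\ref{lem:flat-tie} at $\beta_q$ presupposes that $\beta_q$ is optimal for a.e.\ large $t$, which is the conclusion.
\end{itemize}

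Step~4 has the right mechanism: indifference pins $\Gamma_j^q$ to the tie level $\rho^{-1}\bigl(\phi_{j,\ell}/(\theta_j-\beta_q)-\Psi_j^q\bigr)$. But you then assert this tie level is $<1$ for large $t$ without using the hypothesis $\Gamma_j^q(t_0)\le1-\eta$. A priori its limit can exceed $1$ (compare $g_q>1$ for $q>r$ in Lemma~\ref{lem:optlimit}). What keeps it below $1$ is that it equals $\Gamma_j^q(t_0)\le1-\eta$ at the optimal time $t_0$ and is Cauchy, which is exactly the paper's computation.

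\textbf{Gap 2: persistence of optimality is not established.} Step~2 is left open by your own account, and the sign claim in Step~3 is false. In the case you consider, $\widehat\Gamma_j^q=\widehat\Gamma_j^p=0$, and then
\[
\frac{d}{dt}\bigl(D_{j,q}-D_{j,p}\bigr)=\rho\bigl[(\theta_j-\beta_p)\Gamma_j^p-(\theta_j-\beta_q)\Gamma_j^q\bigr]+(\theta_j-\beta_q)\dot\Psi_j^q-(\theta_j-\beta_p)\dot\Psi_j^p .
\]
Neither the bracket (larger tail $\Gamma_j^p$ but smaller weight $\theta_j-\beta_p$) nor the drift terms have a definite sign. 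Two further problems:
\begin{itemize}
\item The scenario of a single $\beta_p$ absorbing all the selector mass is not exhaustive; the selector may split between $\beta_\ell$ and several bids above $\beta_q$.
\item Even if $D_{j,q}-D_{j,p}$ were monotone, that would not control $D_{j,q}$ itself, i.e.\ the comparison with $\beta_\ell$.
\end{itemize}

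So the proposal has no valid argument that $\beta_q$ remains optimal after $t_0$. Replace the excursion analysis with the integrated bound on the optimal stretch, followed by the continuation argument.
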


\begin{proof}
Fix $\eta>0$.
While $\beta_q$ is optimal, Lemma~\ref{lem:coincide} gives $\widehat\Gamma_j^q=\widehat\Gamma_j^{\star,q}$, so \eqref{eq:gamma-dynamics} and \eqref{eq:selector-cum} yield successively        $\dot\Gamma_j^q=\widehat\Gamma_j^{\star,q}-\Gamma_j^q=-c_q\Gamma_j^\ell+\rho^{-1}(c_q\dot\Psi_j^\ell-\dot\Psi_j^q)$.
Integrating from $t_0$ to $t$,
\begin{align*}
  \Gamma_j^q(t)-\Gamma_j^q(t_0)
  &=\int_{t_0}^t\dot\Gamma_j^q\,ds \\
  &=-c_q\underbrace{\int_{t_0}^t\Gamma_j^\ell\,ds}_{\le\,\Gamma_j^\ell(t_0)\ \to\ 0}
    +\frac{c_q}{\rho}\underbrace{\bigl(\Psi_j^\ell(t)-\Psi_j^\ell(t_0)\bigr)}_{\to\,0}
    -\frac1\rho\underbrace{\bigl(\Psi_j^q(t)-\Psi_j^q(t_0)\bigr)}_{\to\,0},
\end{align*}
where the first term uses $\dot\Gamma_j^\ell=-\Gamma_j^\ell$ (Corollary~\ref{cor:endpoints}), so $\Gamma_j^\ell$ decays exponentially and $\int_{t_0}^t\Gamma_j^\ell\le\Gamma_j^\ell(t_0)$, while the last two vanish because $\Psi_j^\ell$ and $\Psi_j^q$ converge.
Hence, once $t_0$ is large, the right-hand side is $<\eta/2$ in absolute value for all $t\ge t_0$, so
\begin{align*}
  \Gamma_j^q(t)<\Gamma_j^q(t_0)+\tfrac\eta2\le(1-\eta)+\tfrac\eta2=1-\tfrac\eta2<1 .
\end{align*}
By \eqref{eq:selector-cum} this keeps $\widehat\Gamma_j^{\star,q}<1$, so the selector retains positive mass on $\beta_q$ and $D_{j,q}=0$ is maintained.
The set of $t\ge t_0$ at which $\beta_q$ is optimal is therefore relatively open and closed in $[t_0,\infty)$, hence all of it.
\end{proof}

\begin{lemma}[Unfed trapped bids; gaps close]
\label{lem:unfed}
Call $\beta_s$ \emph{trapped} at time $t$ if $\beta_s\notin\BR_j(x(t))$ while $\BR_j(x(t))$ contains a bid below and a bid above $\beta_s$.
Under $H(j)$, for a.e.\ large $t$:
\begin{enumerate}[label=(\roman*)]
  \item\emph{(unfed)} a trapped bid receives no mass: $\alpha_k(\beta_s,t)=0$ for every type $k$;
  \item\emph{(gaps close)} consequently a gap between two optimal bids cannot persist.
\end{enumerate}
\end{lemma}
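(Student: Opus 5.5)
For (i), I will use the monotonicity of best replies (Lemma~\ref{lem:monotone-zero}), arguing type by type. Suppose at time $t$ that $\beta_a<\beta_s<\beta_b$ with $\beta_a,\beta_b\in\BR_j(x(t))$ and $\beta_s\notin\BR_j(x(t))$.

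\emph{Types $k>j$.} Apply Lemma~\ref{lem:monotone-zero} with $\beta_b\in\BR_j$ to get $\phi_{k,b}\ge\phi_{k,s}$.
\begin{itemize}[label=--]
\item If $\Phi^b>\Phi^s$, the inequality is strict, so $\beta_s\notin\BR_k$.
\item If $\Phi^b=\Phi^s$, no mass sits on $\{\beta_s,\dots,\beta_{b-1}\}$, so $x_k(\beta_s,t)=0$. On that zero set, the absolute-continuity argument of Lemma~\ref{lem:flat-tie} gives $\alpha_k(\beta_s,t)=0$ for a.e.\ such $t$.
\end{itemize}

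\emph{Types $k<j$.} I need the mirror statement: if $\beta_a\in\BR_j$, $a<s$ and $k<j$, then $\phi_{k,a}\ge\phi_{k,s}$. This follows from the same decomposition as in Lemma~\ref{lem:monotone-zero},
\[
\phi_{k,a}-\phi_{k,s}=[\phi_{j,a}-\phi_{j,s}]+(\theta_j-\theta_k)(\Phi^s-\Phi^a),
\]
in which both terms are $\ge0$. So $\beta_s$ is never better than $\beta_a$ for a lower type, and the same tie/zero-set dichotomy applies.

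In fact, under $H(j)$ the lower types' selectors are eventually supported on $S_k$, which lies at or below $\beta_\ell$. Any residual mass they might put on $\beta_s$ is killed by the dichotomy just described.

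\emph{Type $j$ itself.} Here the claim is immediate, since $\beta_s\notin\BR_j$ forces $\alpha_j(\beta_s,t)=0$.

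For (ii), I argue by contradiction. Suppose there are optimal bids $\beta_a<\beta_b$ with a trapped $\beta_s$ between them on a set of large times. By (i), every type's $x_k(\beta_s)$ obeys $\dot x_k(\beta_s)=-x_k(\beta_s)$ there, so $\mu_s$ decays exponentially.

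I then combine this decay with the value-gap identity. Between $\beta_a$ and $\beta_b$, the payoff $\phi_{j,\cdot}$ is concave-like in the cumulative mass. Indifference at $\beta_a$ and $\beta_b$ with $\Phi^{s}\to\Phi^{a}+\sum_{a\le q<s}\mu_q$ then forces $\phi_{j,s}\ge\min(\phi_{j,a},\phi_{j,b})$ in the limit. More concretely, if all intermediate masses vanish, then $\Phi^s=\Phi^{a+1}$, and bidding $\beta_{a+1}$ dominates $\beta_b$ strictly whenever $\Phi^b=\Phi^{a+1}$. That contradicts the optimality of $\beta_b$, unless mass reappears, which (i) forbids.

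Hence $D_{j,s}\to0$. Using Lemma~\ref{lem:persist}, $\beta_s$ becomes optimal with room to spare and is retained, so the gap closes.

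\textbf{Main obstacle.} The hardest step is (ii): turning "trapped bids are unfed" into an actual contradiction. The delicate point is ruling out a gap that persists because masses on $\beta_a,\dots,\beta_{b-1}$ stay positive on bids that are not themselves trapped. I would first try to show that the smallest trapped index can only sit above an optimal bid whose point mass decays. That reduces everything to the case where $\beta_b$ is beaten by $\beta_{b-1}$ once $\mu$ vanishes on the gap.
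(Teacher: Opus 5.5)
Your part (i) follows the paper's route (monotonicity of best replies, type by type). However, the tie branches are vacuous and you should say so, because the one for $k<j$ would not work as stated. Since $\beta_a,\beta_b\in\BR_j$ and $\beta_s\notin\BR_j$, you have $\phi_{j,a}=\phi_{j,b}=w_j>\phi_{j,s}$. For $k>j$ this, together with $\theta_j-\beta_b<\theta_j-\beta_s$, forces $\Phi^b>\Phi^s$, so only your strict branch occurs. For $k<j$ the bracket $\phi_{j,a}-\phi_{j,s}$ in your mirror identity is strictly positive, so $\phi_{k,a}>\phi_{k,s}$ and $\beta_s\notin\BR_k$ outright. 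A tie there would only give $\Phi^s=\Phi^a$, i.e.\ no mass on $\{\beta_a,\ldots,\beta_{s-1}\}$, which says nothing about $x_k(\beta_s)$. Your aside that lower types' selectors are eventually supported on $S_k$ is not implied by $H(j)$ and is not needed.

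Part (ii) has a genuine gap. The obstacle you flag disappears if you take $\beta_a<\beta_b$ \emph{adjacent} in $\BR_j$, with no optimal bid strictly between them. Then every intermediate bid is trapped, so by (i) the whole enclosed mass $W=\Phi^b-\Phi^{a+1}$ is unfed and $\dot W=-W$ for as long as this configuration lasts. The mass $\mu_a$ at the optimal endpoint is irrelevant because it is not part of $W$. The contradiction must also be quantitative, not ``in the limit''. Since $\beta_{b-1}$ is trapped, optimality of $\beta_b$ over it gives $(\theta_j-\beta_b)\mu_{b-1}>(\beta_b-\beta_{b-1})\Phi^{b-1}$. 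Meanwhile $\Phi^{b-1}\ge\Phi^a=w_j/(\theta_j-\beta_a)$ stays bounded away from $0$, because $w_j\ge\phi_{j,\ell}\to\pi_j^\star>0$ (Lemma~\ref{lem:anchor}). Since $\mu_{b-1}\le W$ decays exponentially, this inequality fails after a bounded time. What this proves is that the upper endpoint $\beta_b$ drops out of $\BR_j$, which is exactly how the paper closes the gap.

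Your last step, ``hence $D_{j,s}\to0$, and by Lemma~\ref{lem:persist} $\beta_s$ becomes optimal and is retained'', does not follow, for three reasons. First, the contradiction is about $\beta_b$, not $\beta_s$. Second, the ``concave-like'' bound $\phi_{j,s}\ge\min(\phi_{j,a},\phi_{j,b})$ is false at every time $\beta_s$ is trapped (by definition $\phi_{j,s}<w_j$), and it can only hold after $\beta_b$ has stopped being optimal. Third, Lemma~\ref{lem:persist} requires $\beta_s$ to be optimal at some late $t_0$ with $\Gamma_j^s(t_0)\le1-\eta$ and $\Psi_j^s$ convergent, and none of this is available for a trapped bid. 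Separately, the value-gap identity you announce is never actually used.
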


\begin{proof}
\emph{(i)} Let $\beta_a<\beta_s<\beta_p$ with $\beta_a,\beta_p\in\BR_j(x(t))$ and $\beta_s\notin\BR_j(x(t))$; optimality of $\beta_p$ over $\beta_s$ gives $(\theta_j-\beta_p)\Phi^p>(\theta_j-\beta_s)\Phi^s$, hence $\Phi^p>\Phi^s$.
No type feeds $\beta_s$:
\begin{itemize}[label=$\bullet$]
  \item type $j$: $\beta_s\notin\BR_j$, so $\alpha_j(\beta_s,t)=0$;
  \item type $k<j$: were $\beta_s\in\BR_k$, then Lemma~\ref{lem:monotone-zero} applied to $\beta_a<\beta_s$ would give $\phi_{j,s}\ge\phi_{j,a}=w_j$ (as $\beta_a\in\BR_j$), i.e.\ $\beta_s\in\BR_j$ --- impossible;
  \item type $k>j$: from $\Phi^p>\Phi^s$ and the strict form of Lemma~\ref{lem:monotone-zero} ($\beta_s<\beta_p\in\BR_j$), $\phi_{k,p}>\phi_{k,s}$, so $\beta_s\notin\BR_k$.
\end{itemize}
Hence $\alpha_k(\beta_s,t)=0$ for every $k$.

\emph{(ii)} a gap between two optimal bids cannot persist because  the mass it encloses drains, while optimality of its upper endpoint keeps that mass bounded away from $0$, so the endpoint must leave $\BR_j$. Let $\beta_a<\beta_p$ be optimal with every bid between them non-optimal, hence trapped; by~(i) the enclosed mass $W=\Phi^p-\Phi^{a+1}$ is unfed, so \eqref{eq:gamma-dynamics} gives $\dot W=-W$, and $W$ decays exponentially while the gap lasts.
But optimality of $\beta_p$ over $\beta_{p-1}$ forces $\mu_{p-1}>\tfrac{\beta_p-\beta_{p-1}}{\theta_j-\beta_p}\Phi^{p-1}$.
Hence the gap closes: $\beta_p$ leaves $\BR_j$.
\end{proof}

\begin{lemma}
\label{lem:rconv}
Under $H(j)$, for $t$ large the active set is exactly the block, $\BR_j(x(t))=\{\beta_\ell,\dots,\beta_r\}$, and the mass strictly above it vanishes, $1-\Gamma_j^{r+1}(t)\to0$.
\end{lemma}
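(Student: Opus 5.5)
I would prove Lemma~\ref{lem:rconv} in the two directions announced at the start of the subsection: first $\supseteq$ (every block bid becomes optimal and stays so), then $\subseteq$ (nothing above $\beta_r$ survives). Once the active set is known, the vanishing of the mass above the block follows from the dynamics. Throughout, let $T$ be large enough that Corollary~\ref{cor:endpoints}, Lemma~\ref{lem:gap} and Lemma~\ref{lem:unfed} all hold for $t>T$. In particular $\beta_\ell\in\BR_j(x(t))$ and $\Gamma_j^\ell\to0$.

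\emph{Step 1: bounding the top optimal bid.} For $t>T$ let $p(t)$ be the largest index with $\beta_{p(t)}\in\BR_j(x(t))$. The best reply always contains some bid, so the selector puts total mass one on $\{\beta_\ell,\dots,\beta_{p(t)}\}$. Hence $\widehat\Gamma_j^{p(t)+1}=1$, and every bid above $\beta_{p(t)}$ is unfed by type $j$.

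Suppose some $\beta_s$ with $s>r$ is optimal at arbitrarily large times. Using Lemma~\ref{lem:persist}, I expect to show it is then optimal for all large $t$. Lemma~\ref{lem:optlimit} would then give $\Gamma_j^s\to g_s>1$, which is impossible since $\Gamma_j^s\le1$. The delicate point is that Lemma~\ref{lem:persist} needs $\Gamma_j^s\le1-\eta$ at the entry time and needs $\Psi_j^s$ convergent; the latter comes from Lemma~\ref{lem:anchor} once optimality is eventual.

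To handle the entry time, I would argue directly with the indifference equation. At any large time when $\beta_s$ is optimal,
\[
(\theta_j-\beta_s)\bigl(\Psi_j^s+\rho\Gamma_j^s\bigr)=\phi_{j,\ell}\to\pi_j^\star .
\]
On the other side, $\Psi_j^s$ is asymptotically at most $\Lambda_j^s$, because the higher-type contributions vanish by Lemma~\ref{lem:gamma0} applied with the optimal $\beta_\ell$, and the lower types converge by $H(j)$. This forces $\Gamma_j^s\ge g_s-o(1)>1$ for large $t$, a contradiction outright. This direct use of the limiting indifference is what I would try first. It avoids persistence entirely, provided the mass of the higher types below $\beta_s$ can be controlled. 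That control is the main obstacle: Lemma~\ref{lem:gamma0} only controls higher types below bids that are \emph{eventually} optimal. Here I would use $\beta_r$, or the top eventually-optimal bid, together with the fact that higher types place no mass strictly between bids that are optimal for $j$ (Lemma~\ref{lem:unfed}).

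\emph{Step 2: the fill-in induction.} I show by induction on $s=\ell,\dots,r$ that $\beta_s$ is optimal for all large $t$. The base case $s=\ell$ is Corollary~\ref{cor:endpoints}. For the inductive step, assume $\beta_\ell,\dots,\beta_{s-1}$ are eventually optimal but $\beta_s$ is not. By Step 1 the top optimal bid lies at most at $r$.

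In the first case $\beta_s$ lies below some optimal bid infinitely often. Then it is trapped, and Lemma~\ref{lem:unfed}(ii) says the gap closes, so $\beta_s$ becomes optimal at some large $t_0$.

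In the second case $\beta_{s-1}$ is eventually the top optimal bid. Then all mass of type $j$ is fed to bids at or below $\beta_{s-1}$, so $\Gamma_j^s\to1$. But Lemma~\ref{lem:optlimit} requires $\Gamma_j^{s-1}\to g_{s-1}\le1-\delta$. Together these force mass at least about $\delta$ on $\beta_{s-1}$. That mass raises $\Phi^s$ until $\phi_{j,s}$ overtakes $\phi_{j,\ell}$, since $g_s\le1-\delta<1$, so $\beta_s$ becomes optimal.

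In either case $\Gamma_j^s(t_0)\le1-\delta/2$ at the entry time. Lemma~\ref{lem:persist} with $\eta=\delta/2$ then gives persistence.

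\emph{Step 3: the tail mass.} Combining Steps 1 and 2, $\BR_j(x(t))=\{\beta_\ell,\dots,\beta_r\}$ for large $t$. Hence $\widehat\Gamma_j^{r+1}=1$, and \eqref{eq:gamma-dynamics} gives
\[
\frac{d}{dt}\bigl(1-\Gamma_j^{r+1}\bigr)=-\bigl(1-\Gamma_j^{r+1}\bigr),
\]
so the mass above the block decays exponentially.
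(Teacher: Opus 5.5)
\textbf{There is a genuine gap in Step~1,} which carries the whole $\subseteq$ direction. Your claim that $\Psi_j^s$ is asymptotically at most $\Lambda_j^s$ for $s>r$ is false in general, and the tools you name cannot deliver it.
\begin{itemize}
\item Lemma~\ref{lem:gamma0}, whether applied at $\beta_\ell$ or at $\beta_r$, removes higher-type mass only \emph{strictly below} that bid. Applying it at $\beta_r$ also presupposes your Step~2, which in turn cites Step~1.
\item Lemma~\ref{lem:unfed}(i) only forbids feeding a bid at the instants when it is trapped.
\end{itemize}
Neither touches higher-type mass sitting at $\beta_r$ itself, or mass deposited in $(\beta_r,\beta_s)$ at times when $\beta_s$ is not optimal for type $j$. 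That mass survives in the limit. Suppose some $\gamma_k^\star$ with $k>j$ charges $[\beta_r,\beta_s)$, for instance $r_j=\ell_{j+1}$ with $\gamma_{j+1}^\star(\beta_r)>0$, a configuration Lemma~\ref{lem:gamma-nash} explicitly allows. Then every trajectory converging to $\gamma^\star$ has $\Psi_j^s\to\Lambda_j^s+\sum_{k>j}\rho_k\Gamma_k^{\star,s}>\Lambda_j^s$. At an instant where $\beta_s$ is optimal, the indifference then only gives $\rho\Gamma_j^s\approx\rho g_s-\sum_{k>j}\rho_k\Gamma_k^s$, which need not exceed $\rho$, so there is no contradiction.

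Put differently, indifference plus the free lower bound $\Psi_j^s\ge\sum_{k<j}\rho_k\Gamma_k^s\to\Lambda_j^s$ bounds $\Gamma_j^s$ from \emph{above}. That is what justifies your entry estimate $\Gamma_j^s(t_0)\le1-\delta/2$ in Step~2, but Step~1 needs a bound from below. What makes bids above $\beta_r$ lose is the higher types' incentives, not their absence. If a type $k>j$ weakly prefers $\beta_a$ to $\beta_s>\beta_a$, then $\phi_{j,a}-\phi_{j,s}\ge(\theta_k-\theta_j)(\Phi^s-\Phi^a)\ge0$; this single-crossing comparison is what drives Lemma~\ref{lem:gamma-nash}. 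A dynamic version of that comparison is the missing idea. Your fallback through Lemma~\ref{lem:persist} does not supply it, since for $s>r$ neither convergence of $\Psi_j^s$ nor an entry time with $\Gamma_j^s\le1-\eta$ is available.

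\textbf{Step~2 has two smaller holes.}
\begin{itemize}
\item Lemma~\ref{lem:persist} also needs $\Psi_j^s$ to converge. Lemma~\ref{lem:anchor} provides that only once $\beta_s$ is known to be eventually optimal, which is what you are proving.
\item In Case~1, Lemma~\ref{lem:unfed}(ii) says the gap closes, but its proof closes it by the \emph{upper} endpoint leaving $\BR_j$, not by $\beta_s$ entering. Gaps could open and close repeatedly with $\beta_s$ never optimal. Your Case~2 overshoot argument needs type $j$ to stop feeding bids above $\beta_{s-1}$, so it does not cover that scenario.
\end{itemize}
Step~3 is fine and makes explicit what the paper leaves implicit.

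\textbf{Comparison with the paper.} The paper runs the two directions in the opposite order. It does the fill-in first, by an overshoot contradiction like your Case~2 followed by Lemma~\ref{lem:persist}, and then dismisses bids above $\beta_r$ in one line via Lemma~\ref{lem:optlimit}. That line is sound for a bid optimal for \emph{all} large $t$: Lemma~\ref{lem:gamma0} then does clear the higher-type mass below it, and $g_s>1$ contradicts $\Gamma_j^s\le1$. Like your Step~1, though, it does not handle a bid that is optimal only at arbitrarily large times. You were right that this case needs more, but the shortcut you propose does not provide it.
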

\begin{proof}
\begin{itemize}
    \item We first prove that every bid in the equilibrium block
$\{\beta_\ell,\ldots,\beta_r\}$ eventually becomes optimal, and this is done by induction on $q=\ell,\ldots,r$ after observing that 
the case $q=\ell$ is Corollary~\ref{cor:endpoints}.
Hence, assume  that
$\beta_\ell,\ldots,\beta_{q-1}$ are optimal for all sufficiently large
$t$.

Since $\beta_{q-1}$ is optimal,
Lemmas~\ref{lem:anchor} and~\ref{lem:optlimit} give respectively
$
\phi_{j,q-1}(t)\to\pi_j^\star,
$ (and hence
$
\Phi^{q-1}(t)
\to
\frac{\pi_j^\star}{\theta_j-\beta_{q-1}}
$) and $
\Gamma_j^{q-1}(t)\to g_{q-1}=\Gamma_j^{\star,q-1}
$.

\begin{itemize}
    \item  Ad absurdum $\beta_q$ cannot be \textbf{never} optimal after some time. Indeed, suppose that $\beta_q$ is \textbf{never} optimal after some time. Then by Lemma~\ref{lem:unfed} (ii), after some times, $\alpha_j(\beta_k,t)=0$ for $k\geq q$ but this implies that $\Gamma_j^q(t)>\Gamma_{\star,j}^q(t)$, which implies that $\phi_{j,q}>\phi_{j,l}$: contradiction. Hence, after any time $t_0$ there is another time $t$ where $\beta_q\in\BR_j$.
\item 
Whenever $\beta_q$ is optimal,
Lemma~\ref{lem:optlimit} gives
$\Gamma_j^q\to g_q$.
Since $q\le r$,
$g_q=\Gamma_j^{\star,q}\le1-\delta$.
Moreover, while $\beta_q$ is optimal,
higher types cannot place mass below $\beta_q$
(Lemma~\ref{lem:monotone-zero}),
so $\Psi_j^q$ converges.
Lemma~\ref{lem:persist} therefore applies and shows that,
once $\beta_q$ becomes optimal sufficiently late,
it remains optimal forever.

\item This completes the induction and proves that
$
\{\beta_\ell,\ldots,\beta_r\}
\subseteq
\BR_j(x(t))
$
for all sufficiently large $t$.
\end{itemize}
\item It remains to exclude bids above $\beta_r$.
Let $q>r$ with $\beta_q<\theta_j$.
If $\beta_q$ were optimal for arbitrarily large times,
Lemma~\ref{lem:optlimit} would imply
$
\Gamma_j^q(t)\to g_q,
$
but $g_q>1$ by maximality of $r$,
which contradicts
$\Gamma_j^q\le1$.
Hence no admissible bid above $\beta_r$
is eventually optimal.
\end{itemize}

\end{proof}

\subsection{Main convergence theorem}

\begin{lemma}
\label{lem:crowded}
$x_1(t)\to\gamma_1^\star$ and $\dot x_1(t)\to0$.
\end{lemma}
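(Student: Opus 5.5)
We prove that the lowest type converges to $\gamma_1^\star=\dirac{\beta_{r_1}}$, where $r_1=\max\{\ell:\beta_\ell<\theta_1\}$ is the top admissible bid of type $1$. This is the base case $H(2)$ of the induction. By Assumption~\ref{ass:admissible}, $x_1$ and $\alpha_1$ are supported on $\{\beta_1,\dots,\beta_{r_1}\}$. It suffices to show that $\beta_{r_1}\in\BR_1(x(t))$ and that $\alpha_1(t)=\dirac{\beta_{r_1}}$ for a.e.\ $t$ after some finite time $T$. Then $\dot x_1=\dirac{\beta_{r_1}}-x_1$, so $x_1(t)-\dirac{\beta_{r_1}}=e^{-(t-T)}(x_1(T)-\dirac{\beta_{r_1}})$. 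This gives both $x_1\to\gamma_1^\star$ and $\dot x_1\to0$ exponentially.

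The first step is to show that the top bid eventually dominates every lower admissible bid for type $1$. For $a<r_1$ we have $\phi_{1,r_1}-\phi_{1,a}=(\theta_1-\beta_{r_1})\Phi^{r_1}-(\theta_1-\beta_a)\Phi^a$. Write $M(t)=\Phi^{r_1}(x(t))$ for the total mass (over all types) strictly below $\beta_{r_1}$. The key observation concerns who can place mass below $\beta_{r_1}$ in the selector. Suppose some type $k$, possibly $k=1$, puts selector mass on a bid $\beta_a<\beta_{r_1}$. Then $\beta_a$ is optimal for $k$. If $k>1$, Lemma~\ref{lem:monotone-zero} applied at $\beta_a$ versus $\beta_{r_1}$ is not directly usable, because monotonicity runs from lower types to higher ones. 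So I argue directly. For every type, the payoff of $\beta_a$ is at most $(\theta_k-\beta_a)\Phi^{a+1}$, while $\beta_{a+1}$ earns $(\theta_k-\beta_{a+1})\Phi^{a+1}$ against the same lower mass. Hence a bid below $\beta_{r_1}$ can only be optimal for type $1$ if the mass between it and $\beta_{r_1}$ is large relative to $M$.

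The plan is to show $M(t)\to0$. The natural Lyapunov candidate is $M$ itself. On the set of times where no type's selector puts mass below $\beta_{r_1}$, \eqref{eq:gamma-dynamics} gives $\dot M=-M$. I would argue by contradiction that $M$ cannot stay bounded away from $0$. Let $\beta_p$ be the lowest bid carrying non-negligible mass. A bid at $\beta_p$ wins only against strictly lower mass, which is negligible, so its payoff is near $0$ for every type. Meanwhile bidding $\beta_{p+1}$ (or the top admissible bid) earns a positive amount against the mass at $\beta_p$. Therefore no type feeds $\beta_p$ or anything below it, and that mass drains exponentially. Iterating upward over the finitely many bids below $r_1$, we should get $\Phi^{q}\to0$ for $q\le r_1$, that is, $M\to0$. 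Here a bid is fed by a type only if it beats the next bid up, which requires enough mass sitting exactly at it.

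The main obstacle is the middle of this cascade. A bid $\beta_q<\beta_{r_1}$ can remain optimal for type $1$ while it carries mass, because ties give zero. The argument must show that the mass at the lowest occupied bid is never replenished. For type $1$ this holds because $\phi_{1,q}=(\theta_1-\beta_q)\Phi^q$ with $\Phi^q$ small, while $\phi_{1,q+1}\ge(\theta_1-\beta_{q+1})\mu_q$ with $\theta_1-\beta_{q+1}>0$ for $q+1\le r_1$. For higher types, Lemma~\ref{lem:monotone-zero} makes the same comparison strict.

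Once $M\to0$, the conclusion for type $1$ follows as in Lemma~\ref{lem:flat-tie}. Every bid $\beta_a\le\beta_{r_1}$ yields $\phi_{1,a}\to0$. The top bid $\beta_{r_1}$ is then the bid whose nonoptimality must be excluded: as in Lemma~\ref{lem:unfed}(ii), a lower bid being optimal forces positive mass just below it, contradicting $M\to0$ through the draining. So eventually $\alpha_1=\dirac{\beta_{r_1}}$ almost everywhere, and the exponential formula above finishes the proof.
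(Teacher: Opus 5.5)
Your cascade is essentially sound. In quantitative form: if $\Phi^q\to0$, then some type can feed $\beta_q$ only when $\mu_q\le\frac{\beta_{q+1}-\beta_q}{\theta_1-\beta_{q+1}}\Phi^q$, so $\mu_q\to0$, and inductively $M=\Phi^{r_1}\to0$. Since $\rho_1\Gamma_1^{r_1}\le M$, this already gives $x_1\to\gamma_1^\star$.

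The gap is the final step, which is your only route to $\dot x_1=\alpha_1-x_1\to0$. Smallness of $M$ cannot exclude lower bids from $\BR_1$. For $\ell\le r_1$ the payoffs $(\theta_1-\beta_\ell)\Phi^\ell$ are homogeneous of degree one in the masses below $\beta_{r_1}$, so every optimality pattern occurs at every scale. For example, if all sub-$\beta_{r_1}$ mass sits at $\beta_1$ and $r_1\ge3$, then $\BR_1=\{\beta_2\}$ however small that mass is. The ``positive mass just below'' an optimal bid $\beta_p$ is only required relative to $\Phi^{p-1}$, so it shrinks with $M$ and contradicts nothing; Lemmas~\ref{lem:flat-tie} and~\ref{lem:unfed} do not supply a contradiction either. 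Incidentally, optimality of a lower bid requires the mass between it and $\beta_{r_1}$ to be \emph{small} relative to $M$, not large as you wrote.

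There is a second problem. Even once $\beta_{r_1}$ is optimal, lower bids can stay tied with it forever. Take $m=1$, $\theta_1=1$, bids $0,0.1,0.2,0.3$, and sub-top masses in ratio $56:7:9$. Then $0.1,0.2,0.3$ are all optimal, and this persists along the trajectory. So you must also show that the selector ignores the tied lower bids.

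The missing idea is a scale-invariant quantity. Let $w_1=\max_\ell\phi_{1,\ell}$ and $G=\phi_{1,r_1}=(\theta_1-\beta_{r_1})M$.

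\begin{itemize}
\item No type feeds strictly below type $1$'s lowest optimal bid $\beta_{q_*}$. For type $1$ this is trivial; for higher types it follows from Lemma~\ref{lem:monotone-zero} plus the zero-set argument of Lemma~\ref{lem:flat-tie}.
\item Hence a.e.\ $\dot w_1=\dot\phi_{1,q_*}=-(\theta_1-\beta_{q_*})\Phi^{q_*}=-w_1$, i.e.\ $w_1(t)=w_1(0)e^{-t}$.
\item Meanwhile $\dot G=(\theta_1-\beta_{r_1})\widehat\Phi^{r_1}-G$, where $\widehat\Phi^{r_1}=\sum_k\rho_k\widehat\Gamma_k^{r_1}\ge\rho_1$ whenever $\beta_{r_1}\notin\BR_1$, by admissibility.
\item Therefore $\frac{d}{dt}(G/w_1)=(\theta_1-\beta_{r_1})\widehat\Phi^{r_1}/w_1\ge0$, and this rate is at least $\rho_1(\theta_1-\beta_{r_1})e^{t}/w_1(0)$ while $G<w_1$.
\item Since $G/w_1\le1$, it reaches $1$ in finite time and stays there. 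Its derivative then vanishes a.e., which forces $\widehat\Phi^{r_1}=0$, i.e.\ $\alpha_1=\dirac{\beta_{r_1}}$ a.e.\ thereafter.
\end{itemize}

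If $w_1(0)=0$, then $M\equiv0$ and the same conclusion is immediate. This argument also gives $M\le w_1/(\theta_1-\beta_{r_1})\to0$ directly, so the cascade becomes unnecessary.
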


\begin{proof}
Ad absurdum, it is clear that the selector for type $1$ shall converge to a deterministic policy. 
The only possible value is the one given by $\gamma_1^\star$, hence $x_1(t)\to\gamma_1^\star$. Hence at some point, $\beta_{l_1}$ is the only element in $\BR_1$. The differential inclusion implies that  $\dot x_1(t)\to0$.
\end{proof}

\begin{lemma}
\label{lem:induction-step}
If $H(j)$ holds, then $x_j(t)\to\gamma_j^\star$ and $\dot x_j(t)\to0$.
\end{lemma}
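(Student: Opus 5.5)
The plan is to read off $x_j(t)\to\gamma_j^\star$ from the cumulative tails, threshold by threshold, using Lemma~\ref{lem:rconv} to pin down the active set. Since $x_j(\beta_s)=\Gamma_j^{s+1}-\Gamma_j^s$, convergence of the policy is equivalent to $\Gamma_j^s(t)\to\Gamma_j^{\star,s}$ for every $s\in\{1,\dots,K+1\}$. We then obtain the velocity clause from the limiting form of the selector.

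\emph{Tails below and above the block.} For $s\le\ell$ we have $0\le\Gamma_j^s\le\Gamma_j^\ell\to0$ by Corollary~\ref{cor:endpoints}, and $\Gamma_j^{\star,s}=0$ there. For $s\ge r+1$ we have $1\ge\Gamma_j^s\ge\Gamma_j^{r+1}\to1$ by Lemma~\ref{lem:rconv}, and $\Gamma_j^{\star,s}=1$ there. Inadmissible bids $\beta_s\ge\theta_j$ carry no mass by Assumption~\ref{ass:admissible}, so they cause no trouble.

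\emph{Tails inside the block.} For $\ell\le s\le r$, Lemma~\ref{lem:rconv} shows that $\beta_s$ is optimal for all large $t$. Lemma~\ref{lem:optlimit} then gives $\Gamma_j^s\to g_s=\Gamma_j^{\star,s}$. Differencing these limits yields $x_j(t)\to\gamma_j^\star$.

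\emph{Velocity clause.} Here I would use $\dot x_j=\alpha_j-x_j$, so it suffices to show $\alpha_j(t)\to\gamma_j^\star$ along a.e.\ $t$. Outside the block, $\alpha_j$ vanishes for large $t$, because those bids are not in $\BR_j$ by Lemma~\ref{lem:rconv}. Inside the block, Lemma~\ref{lem:coincide} identifies $\widehat\Gamma_j^s=\widehat\Gamma_j^{\star,s}$. By \eqref{eq:selector-cum},
\[
\widehat\Gamma_j^{\star,s}=\Gamma_j^s-c_s\Gamma_j^\ell+\rho^{-1}\bigl(c_s\dot\Psi_j^\ell-\dot\Psi_j^s\bigr).
\]
The first term tends to $\Gamma_j^{\star,s}$ and the second to $0$. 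The drift terms tend to $0$ by Lemma~\ref{lem:drift}, which applies because every $\beta_s$ in the block is optimal for all large $t$. Hence $\widehat\Gamma_j^s\to\Gamma_j^{\star,s}$, so $\dot\Gamma_j^s=\widehat\Gamma_j^s-\Gamma_j^s\to0$ for every $s$, and therefore $\dot x_j\to0$ (a.e.\ in $t$, which is the sense in which $H(j)$ is used).

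\emph{Main obstacle.} The delicate point is the drift term: Lemma~\ref{lem:drift} requires the higher-type tails $\Gamma_k^s$, $k>j$, to vanish at every block threshold $s$. This needs Lemma~\ref{lem:flat-tie} and Lemma~\ref{lem:gamma0} applied at each $\beta_s$, and that in turn depends on the persistence of optimality established in Lemma~\ref{lem:rconv}. I would check carefully that the time $T$ after which all block bids are simultaneously optimal is uniform over the finitely many $s$; finiteness makes this immediate.
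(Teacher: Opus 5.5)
\textbf{Missing case $\pi_j^\star=0$.} For $\pi_j^\star>0$ your argument is the paper's. The tails below, inside and above the block come from Corollary~\ref{cor:endpoints}, Lemma~\ref{lem:optlimit} and Lemma~\ref{lem:rconv}. The velocity clause comes from Lemma~\ref{lem:coincide}, \eqref{eq:selector-cum} and Lemma~\ref{lem:drift}. What you omit is the case split that opens the paper's proof. If $\pi_j^\star=0$, the paper does not use this machinery at all and instead appeals to the crowded-type argument of Lemma~\ref{lem:crowded}.

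This case is not vacuous, and your chain breaks there. Every lemma you invoke descends from Lemma~\ref{lem:lower-endpoint}. That lemma requires $\ell_j$ to be the \emph{strict} maximizer of $(\theta_j-\beta_\ell)\Lambda_j^\ell$, i.e.\ $\pi_j^\star>0$.
\begin{itemize}
\item For $j=1$, where $\pi_1^\star=0$ by definition, this leaves Corollary~\ref{cor:endpoints} simply unproved.
\item For $j\ge2$, note that $\Lambda_j^\ell\ge\rho_1$ for $\ell>r_1$. Hence $\pi_j^\star=0$ occurs exactly when type $j$ has no admissible bid above $\beta_{r_1}$. Then the maximand vanishes identically, so $\ell_j=1$, $r_j=r_1$, and $\gamma_j^\star=\dirac{\beta_{r_1}}$.
\end{itemize}
In the second situation the intermediate claims are false, not just unproved. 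We have $\phi_{j,1}\equiv0$, since $\Phi^1\equiv0$. By \eqref{eq:gamma-dynamics}, $\Gamma_k^{r_1}(t)\ge\Gamma_k^{r_1}(0)e^{-t}$ for every type $k$. So as soon as some type initially puts mass strictly below $\beta_{r_1}$, we get $w_j(t)\ge(\theta_j-\beta_{r_1})\Phi^{r_1}(t)>0$ for all $t$. Therefore $\beta_{\ell_j}=\beta_1\notin\BR_j(x(t))$ at every time. This contradicts the conclusion of Corollary~\ref{cor:endpoints} and the identification $\BR_j(x(t))=\{\beta_1,\dots,\beta_{r_1}\}$ of Lemma~\ref{lem:rconv}, both of which you rely on.

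To complete the proof you must treat $\pi_j^\star=0$ separately. There you need to show directly that the mass of type $j$ collapses onto its top admissible bid $\beta_{r_1}$; the paper handles this by reduction to Lemma~\ref{lem:crowded}.
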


\begin{proof}
If $\pi_j^\star=0$, apply Lemma~\ref{lem:crowded}. Suppose $\pi_j^\star>0$. Lemma~\ref{lem:rconv} identifies the eventual active block. For $\ell\le s\le r$, Lemma~\ref{lem:optlimit} gives $\Gamma_j^s\to\Gamma_j^{\star,s}$. Below the block, $\Gamma_j^s\le\Gamma_j^\ell\to0$; above it, $\Gamma_j^s\ge\Gamma_j^{r+1}\to1$. Hence $x_j(t)\to\gamma_j^\star$.

For $\ell\le s\le r$, Lemma~\ref{lem:coincide} and \eqref{eq:selector-cum} give
$\dot\Gamma_j^s=-c_s\Gamma_j^\ell+\rho^{-1}(c_s\dot\Psi_j^\ell-\dot\Psi_j^s)\to0$.
Outside the block, both the selector and the state tails converge to $0$ or $1$, so \eqref{eq:gamma-dynamics} again gives $\dot\Gamma_j^s\to0$. Therefore $\dot x_j(t)\to0$.
\end{proof}

\begin{theorem}
\label{thm:main}
Every solution of zero-on-tie fictitious play converges to  the Nash equilibrium $\gamma^\star$. 
\end{theorem}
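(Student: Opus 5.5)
The plan is a finite induction on the value types in increasing order $j=1,\dots,m$, carried by the hypothesis $H(j)$. Once $H(m+1)$ holds, every coordinate $x_k(t)$ converges to $\gamma_k^\star$. Lemma~\ref{lem:gamma-nash} then says this limit is a Nash equilibrium of the zero-on-tie auction, which gives the theorem.

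\emph{Base case.} $H(1)$ holds vacuously, since there are no types below $1$. Lemma~\ref{lem:crowded} then gives $x_1(t)\to\gamma_1^\star$ and $\dot x_1(t)\to0$. Together these establish $H(2)$.

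\emph{Inductive step.} Assume $H(j)$ holds for some $j\le m$. Lemma~\ref{lem:induction-step} gives $x_j(t)\to\gamma_j^\star$ and $\dot x_j(t)\to0$, and adding this to $H(j)$ yields $H(j+1)$. The step splits into two cases.
\begin{itemize}
\item If $\pi_j^\star=0$, we are in the crowded situation: the lower types $k<j$ contribute no mass below any admissible bid of type $j$. We redo the argument of Lemma~\ref{lem:crowded} for type $j$, reading it relative to the converged environment. Concretely, every admissible bid pays $0$ in the limit, and $\gamma_j^\star$ is the point mass at the smallest maximiser.
\item If $\pi_j^\star>0$, the chain Corollary~\ref{cor:endpoints}, Lemma~\ref{lem:gap}, Lemma~\ref{lem:coincide}, Lemma~\ref{lem:rconv} and Lemma~\ref{lem:optlimit} produces the convergence. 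Lemma~\ref{lem:induction-step} already packages it.
\end{itemize}
In both cases the uniformity in $t$ needed by the later steps comes from the convergence of the velocities. Since $\dot x_k\to0$ for $k<j$, Lemma~\ref{lem:drift} applies and the environment drifts $\dot\Psi_j^s$ vanish. This is why $H(j)$ carries the velocity clause and not only the position clause: Lemma~\ref{lem:persist} integrates these drifts.

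\emph{Main obstacle.} The delicate point is the step from $H(j)$ to $H(j+1)$ when $\pi_j^\star=0$. Lemma~\ref{lem:crowded} is stated only for type $1$, so it has to be checked in the general setting. We must verify that with zero environment payoff the selector of type $j$ is eventually concentrated on $\beta_{\ell_j}$, and that no residual mass accumulates on ties.

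For this step I would use Assumption~\ref{ass:admissible} to exclude the over-bidding rest points of Remark~\ref{rem:overbid}. I would then apply Lemma~\ref{lem:unfed} and Lemma~\ref{lem:gamma0} to show that mass of type $j$ strictly below $\beta_{\ell_j}$ decays like $e^{-t}$. Once that mass is gone, $\Phi^{\ell_j}$ is bounded away from zero relative to the competing bids, which forces $\BR_j=\{\beta_{\ell_j}\}$ eventually and hence $\dot x_j\to0$.

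A second point to check is that the induction hypotheses are applied at a common late time $T$. Each step only needs "for $t$ large", and there are finitely many types, so taking the maximum of the finitely many thresholds suffices.
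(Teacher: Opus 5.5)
Your skeleton is the paper's: you induct on types via Lemma~\ref{lem:induction-step}, start from Lemma~\ref{lem:crowded}, and conclude with Lemma~\ref{lem:gamma-nash}. Your reading that $H(1)$ is vacuous and that Lemma~\ref{lem:crowded} yields $H(2)$ is the correct one. You also rightly notice that Lemma~\ref{lem:induction-step} handles every case $\pi_j^\star=0$ by citing Lemma~\ref{lem:crowded}, which is only about type $1$.

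However, the repair you sketch for that case targets the wrong profile.
\begin{itemize}
\item If $\pi_j^\star=0$, then $(\theta_j-\beta_\ell)\Lambda_j^\ell$ vanishes on the whole admissible set, so its smallest maximiser is the lowest bid, $\ell_j=1$. Meanwhile \eqref{eq:support-equality} is solved by $\Gamma_j^{\star,s}=0$ for every admissible $s$, so $r_j$ is the top admissible index and $\gamma_j^\star=\dirac{\beta_{r_j}}$. For type $1$ the paper sidesteps this by setting $\ell_1=r_1$ directly.
\item The bid $\beta_1$ earns $0$ in every state, since nothing lies strictly below it. It is therefore a best reply only when no mass lies strictly below $\beta_{r_j}$, and the selector cannot settle on it unless $r_j=1$.
\item Reading $\ell_j$ as $r_j$ does not rescue your mechanism. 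At $\gamma^\star$ itself $\Phi^s=\Lambda_j^s=0$ for every $s\le r_j$. So along any trajectory converging there, $\Phi^{r_j}\to0$, and no payoff of type $j$ stays bounded away from zero.
\item The intermediate goal ``$\BR_j=\{\beta_{r_j}\}$ eventually'' is also false. Take one type, $\theta_1=1$, bids $0,0.2,0.4$ and $x(0)=(0.3,0.1,0.6)$. The selector $\dirac{0.4}$ keeps $0.2$ and $0.4$ tied for all $t$ while $x\to\dirac{0.4}$.
\item Lemmas~\ref{lem:unfed} and~\ref{lem:gamma0} concern trapped bids and higher types, not type $j$'s own mass at low bids.
\end{itemize}

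What does work is a bottom-up drainage that uses relative rather than absolute payoffs.
\begin{itemize}
\item Since $\Lambda_j^{r_j}=0$ forces $\beta_{r_j}<\theta_1$, every bid up to $\beta_{r_j}$ is admissible for every type.
\item For $s+2\le r_j$, compare $\beta_{s+1}$ with $\beta_{s+2}$. If $\beta_{s+1}\in\BR_k(x)$ for any type $k$, then $\mu_{s+1}\le C\,\Phi^{s+1}$ with $C=\max_k(\beta_{s+2}-\beta_{s+1})/(\theta_k-\beta_{s+2})<\infty$.
\item So a level is fed only while its mass is dominated by the mass beneath it, and otherwise it decays at rate $1$.
\item Start from $\Phi^1\equiv0$ and induct on $s$. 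This gives $\Phi^{s}\to0$ for all $s\le r_j$, hence $x_j\to\dirac{\beta_{r_j}}=\gamma_j^\star$ by admissibility.
\end{itemize}
The velocity clause $\dot x_j\to0$ still needs its own argument, showing that the selector weight below $\beta_{r_j}$ vanishes. That, rather than $\BR_j$ becoming a singleton, is what $H(j+1)$ requires.
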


\begin{proof}
Apply Lemma~\ref{lem:induction-step} successively for $j=1,\dots,m$. The hypothesis $H(1)$ is Lemma~\ref{lem:crowded}, and each step establishes $H(j+1)$. Thus $x(t)\to\gamma^\star$, which is a Nash equilibrium by Lemma~\ref{lem:gamma-nash}.
\end{proof}

\begin{theorem}
\label{thm:epsilon-uniform}
Let $\gamma^\star$ be the zero-on-tie equilibrium and let $q^\star(b)=\sum_{k=1}^m\rho_k\gamma_k^\star(b)$ be the induced  bid mass. Then $\gamma^\star$ is an $\varepsilon$-equilibrium of the uniform-split auction for
$\varepsilon=\tfrac12\max_{j,b}(\theta_j-b)_+q^\star(b)$.
In particular, if $h=\max_bq^\star(b)$, then $\gamma^\star$ is a $\theta_mh/2$-equilibrium.
\end{theorem}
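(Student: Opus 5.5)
The plan is to compare, bid by bid, the interim payoffs of the two auctions against the fixed opponent profile $\gamma^\star$, and then to combine this comparison with the fact that $\gamma^\star$ is a Nash equilibrium of the zero-on-tie game (Lemma~\ref{lem:gamma-nash}). For a type $j$ and a bid $b=\beta_s$, write $\phi^{0}_{j}(b)=(\theta_j-b)\Phi^{\star,s}$ for the zero-on-tie payoff, as in \eqref{eq:zero-payoff}. Write $\phi^{U}_{j}(b)=(\theta_j-b)\bigl(\Phi^{\star,s}+\tfrac12 q^\star(b)\bigr)$ for the uniform-split payoff. Here $q^\star(b)=\Phi^{\star,s+1}-\Phi^{\star,s}$ is the probability that the opponent bids exactly $b$. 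The only difference between the two games is the tie term, so the basic identity is
\begin{align*}
  \phi^U_j(b)-\phi^0_j(b)=\tfrac12(\theta_j-b)\,q^\star(b).
\end{align*}

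First, I would note that this gap is always at most $\tfrac12(\theta_j-b)_+q^\star(b)$. It equals that quantity when $b\le\theta_j$, and it is nonpositive when $b>\theta_j$, since overbidding into a tie only loses more. Second, for any type $j$ and any deviation $b$, I would take a bid $b_0$ in the support of $\gamma_j^\star$. The support lies in $S_j$, all of whose bids lie strictly below $\theta_j$ by Assumption~\ref{ass:admissible} and the construction. This gives the chain
\begin{align*}
  \phi^U_j(b)\le\phi^0_j(b)+\tfrac12(\theta_j-b)_+q^\star(b)\le\phi^0_j(b_0)+\varepsilon\le\phi^U_j(b_0)+\varepsilon .
\end{align*}
The middle inequality uses that $b_0$ is a zero-on-tie best reply; this holds for every admissible bid, and for non-admissible bids $\phi^0_j(b)\le0\le\pi_j^\star$. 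The last inequality uses $\phi^U_j(b_0)\ge\phi^0_j(b_0)$, valid because $b_0<\theta_j$. Averaging over $b_0\sim\gamma_j^\star$ then shows that the payoff of $\gamma_j^\star$ in the uniform-split game is within $\varepsilon$ of the best deviation payoff, type by type. Taking expectations over types preserves the bound.

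The ``in particular'' part is immediate, since $(\theta_j-b)_+\le\theta_j\le\theta_m$ (assuming $\beta_1\ge0$) and $q^\star(b)\le h$.

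The only delicate point, which I expect to be the main obstacle, is bookkeeping. One has to check that the interim $\varepsilon$-equilibrium notion is meant per type. One also has to check that deviations to bids outside the admissible set $\bidset_j$, which are excluded in the dynamics, are still covered. I handle these bids directly through the sign of $\theta_j-b$ rather than through Lemma~\ref{lem:gamma-nash}, which only speaks of admissible bids.
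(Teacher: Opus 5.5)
Your proposal is correct and follows the paper's proof essentially step for step. Both arguments use the tie-term identity $\phi^U_j(b)=\phi^0_j(b)+\tfrac12(\theta_j-b)q^\star(b)$, then the zero-on-tie equilibrium inequality, then the fact that the tie bonus is nonnegative on the support of $\gamma_j^\star$ because those bids lie below value, and finally linearity and the crude bound $(\theta_j-b)_+q^\star(b)\le\theta_m h$. Your separate treatment of over-value deviations uses $\phi^0_j(b)\le 0\le\pi_j^\star$ together with a nonpositive tie term. This makes explicit a point the paper's proof passes over, since the construction behind Lemma~\ref{lem:gamma-nash} only optimizes over admissible bids.
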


\begin{proof}
Write $\payoff_j^0$ and $\payoff_j^U$ for the zero-on-tie and uniform-split payoffs. Against $\gamma^\star$,
$
  \payoff_j^U(b,\gamma^\star)
  =\payoff_j^0(b,\gamma^\star)
    +\frac12(\theta_j-b)q^\star(b) 
  \le\payoff_j^0(b,\gamma^\star)+\varepsilon.
$
Because $\gamma^\star$ is a zero-on-tie Nash equilibrium,
$\payoff_j^0(b,\gamma^\star)\le\payoff_j^0(\gamma_j^\star,\gamma^\star)$.
The bids in the support of $\gamma_j^\star$ are below value, so the additional tie payoff is nonnegative and
$\payoff_j^0(\gamma_j^\star,\gamma^\star)\le\payoff_j^U(\gamma_j^\star,\gamma^\star)$.
Combining these inequalities proves the deviation bound for pure bids, and linearity extends it to mixed deviations. Finally, $q^\star(b)\le h$ and $(\theta_j-b)_+\le\theta_m$, so $\varepsilon\le\theta_mh/2$.
\end{proof}


\begin{thebibliography}{30}
\expandafter\ifx\csname natexlab\endcsname\relax\def\natexlab#1{#1}\fi
\providecommand{\url}[1]{\texttt{#1}}
\providecommand{\href}[2]{#2}
\providecommand{\path}[1]{#1}
\providecommand{\DOIprefix}{doi:}
\providecommand{\ArXivprefix}{arXiv:}
\providecommand{\URLprefix}{URL: }
\providecommand{\Pubmedprefix}{pmid:}
\providecommand{\doi}[1]{\href{http://dx.doi.org/#1}{\path{#1}}}
\providecommand{\Pubmed}[1]{\href{pmid:#1}{\path{#1}}}
\providecommand{\bibinfo}[2]{#2}
\ifx\xfnm\relax \def\xfnm[#1]{\unskip,\space#1}\fi
\bibitem[{Ahunbay and Bichler(2025)}]{ahunbay2025uniqueness}
\bibinfo{author}{Ahunbay, M.{\c{S}}.}, \bibinfo{author}{Bichler, M.},
  \bibinfo{year}{2025}.
\newblock \bibinfo{title}{On the uniqueness of bayesian coarse correlated
  equilibria in standard first-price and all-pay auctions}, in:
  \bibinfo{booktitle}{Proceedings of the 2025 Annual ACM-SIAM Symposium on
  Discrete Algorithms (SODA)}, \bibinfo{organization}{SIAM}. pp.
  \bibinfo{pages}{2491--2537}.
\bibitem[{Aubin and Cellina(1984)}]{aubinDifferentialInclusionsSetValued1984}
\bibinfo{author}{Aubin, J.P.}, \bibinfo{author}{Cellina, A.},
  \bibinfo{year}{1984}.
\newblock \bibinfo{title}{Differential Inclusions: Set-Valued Maps and
  Viability Theory}. volume \bibinfo{volume}{264} of
  \textit{\bibinfo{series}{Grundlehren der mathematischen Wissenschaften}}.
\newblock \bibinfo{publisher}{Springer}.
\bibitem[{Bena{\"i}m et~al.(2005)Bena{\"i}m, Hofbauer and
  Sorin}]{benaim2005stochastic}
\bibinfo{author}{Bena{\"i}m, M.}, \bibinfo{author}{Hofbauer, J.},
  \bibinfo{author}{Sorin, S.}, \bibinfo{year}{2005}.
\newblock \bibinfo{title}{Stochastic approximations and differential
  inclusions}.
\newblock \bibinfo{journal}{SIAM Journal on Control and Optimization}
  \bibinfo{volume}{44}, \bibinfo{pages}{328--348}.
\bibitem[{Berger(2005)}]{berger2005fictitious}
\bibinfo{author}{Berger, U.}, \bibinfo{year}{2005}.
\newblock \bibinfo{title}{Fictitious play in $2\times n$ games}.
\newblock \bibinfo{journal}{Journal of Economic Theory} \bibinfo{volume}{120},
  \bibinfo{pages}{139--154}.
\bibitem[{Bichler et~al.(2021)Bichler, Fichtl, Heidekr{\"u}ger, Kohring and
  Sutterer}]{bichler2021learning}
\bibinfo{author}{Bichler, M.}, \bibinfo{author}{Fichtl, M.},
  \bibinfo{author}{Heidekr{\"u}ger, S.}, \bibinfo{author}{Kohring, N.},
  \bibinfo{author}{Sutterer, P.}, \bibinfo{year}{2021}.
\newblock \bibinfo{title}{Learning equilibria in symmetric auction games using
  artificial neural networks}.
\newblock \bibinfo{journal}{Nature machine intelligence} \bibinfo{volume}{3},
  \bibinfo{pages}{687--695}.
\bibitem[{Bichler et~al.(2025a)Bichler, Fichtl and
  Oberlechner}]{bichler2025computing}
\bibinfo{author}{Bichler, M.}, \bibinfo{author}{Fichtl, M.},
  \bibinfo{author}{Oberlechner, M.}, \bibinfo{year}{2025}a.
\newblock \bibinfo{title}{Computing bayes--nash equilibrium strategies in
  auction games via simultaneous online dual averaging}.
\newblock \bibinfo{journal}{Operations Research} \bibinfo{volume}{73},
  \bibinfo{pages}{1102--1127}.
\bibitem[{Bichler et~al.(2023)Bichler, Kohring and
  Heidekr{\"u}ger}]{bichler2023learning}
\bibinfo{author}{Bichler, M.}, \bibinfo{author}{Kohring, N.},
  \bibinfo{author}{Heidekr{\"u}ger, S.}, \bibinfo{year}{2023}.
\newblock \bibinfo{title}{Learning equilibria in asymmetric auction games}.
\newblock \bibinfo{journal}{INFORMS Journal on Computing} \bibinfo{volume}{35},
  \bibinfo{pages}{523--542}.
\bibitem[{Bichler et~al.(2025b)Bichler, Lunowa, Oberlechner, Pieroth and
  Wohlmuth}]{bichler2025beyond}
\bibinfo{author}{Bichler, M.}, \bibinfo{author}{Lunowa, S.B.},
  \bibinfo{author}{Oberlechner, M.}, \bibinfo{author}{Pieroth, F.R.},
  \bibinfo{author}{Wohlmuth, B.}, \bibinfo{year}{2025}b.
\newblock \bibinfo{title}{Beyond monotonicity: On the convergence of learning
  algorithms in standard auction games}, in: \bibinfo{booktitle}{Proceedings of
  the AAAI Conference on Artificial Intelligence}, pp.
  \bibinfo{pages}{13649--13657}.
\bibitem[{Brown(1951)}]{brown1951iterative}
\bibinfo{author}{Brown, G.W.}, \bibinfo{year}{1951}.
\newblock \bibinfo{title}{Iterative solutions of games by fictitious play}, in:
  \bibinfo{editor}{Koopmans, T.C.} (Ed.), \bibinfo{booktitle}{Activity Analysis
  of Production and Allocation}. \bibinfo{publisher}{Wiley}, pp.
  \bibinfo{pages}{374--376}.
\bibitem[{Fibich and Gavish(2011)}]{fibich2011numerical}
\bibinfo{author}{Fibich, G.}, \bibinfo{author}{Gavish, N.},
  \bibinfo{year}{2011}.
\newblock \bibinfo{title}{Numerical simulations of asymmetric first-price
  auctions}.
\newblock \bibinfo{journal}{Games and Economic Behavior} \bibinfo{volume}{73},
  \bibinfo{pages}{479--495}.
\bibitem[{Fibich and Gavish(2012)}]{fibich2012asymmetric}
\bibinfo{author}{Fibich, G.}, \bibinfo{author}{Gavish, N.},
  \bibinfo{year}{2012}.
\newblock \bibinfo{title}{Asymmetric first-price auctions---a dynamical-systems
  approach}.
\newblock \bibinfo{journal}{Mathematics of Operations Research}
  \bibinfo{volume}{37}, \bibinfo{pages}{219--243}.
\bibitem[{Filos-Ratsikas et~al.(2021)Filos-Ratsikas, Giannakopoulos, Hollender,
  Lazos and Po{\c{c}}as}]{filosratsikas2021complexity}
\bibinfo{author}{Filos-Ratsikas, A.}, \bibinfo{author}{Giannakopoulos, Y.},
  \bibinfo{author}{Hollender, A.}, \bibinfo{author}{Lazos, P.},
  \bibinfo{author}{Po{\c{c}}as, D.}, \bibinfo{year}{2021}.
\newblock \bibinfo{title}{On the complexity of equilibrium computation in
  first-price auctions}, in: \bibinfo{booktitle}{Proceedings of the 22nd ACM
  Conference on Economics and Computation (EC)}.
\bibitem[{Foster and Young(1998)}]{FOSTER199879}
\bibinfo{author}{Foster, D.P.}, \bibinfo{author}{Young, H.},
  \bibinfo{year}{1998}.
\newblock \bibinfo{title}{On the nonconvergence of fictitious play in
  coordination games}.
\newblock \bibinfo{journal}{Games and Economic Behavior} \bibinfo{volume}{25},
  \bibinfo{pages}{79--96}.
\newblock \URLprefix
  \url{https://www.sciencedirect.com/science/article/pii/S0899825697906266},
  \DOIprefix\doi{https://doi.org/10.1006/game.1997.0626}.
\bibitem[{Gaunersdorfer and Hofbauer(1995)}]{gaunersdorfer1995fictitious}
\bibinfo{author}{Gaunersdorfer, A.}, \bibinfo{author}{Hofbauer, J.},
  \bibinfo{year}{1995}.
\newblock \bibinfo{title}{Fictitious play, shapley polygons, and the replicator
  equation}.
\newblock \bibinfo{journal}{Games and Economic Behavior} \bibinfo{volume}{11},
  \bibinfo{pages}{279--303}.
\bibitem[{Heymann(2025)}]{fp4fpa}
\bibinfo{author}{Heymann, B.}, \bibinfo{year}{2025}.
\newblock \bibinfo{title}{{FP4FPA}: Fictitious play for first-price auctions}.
\newblock \bibinfo{note}{Software repository,
  \url{https://github.com/BenHey/FP4FPA}}.
\bibitem[{Heymann and Mertikopoulos(2021)}]{heymann2021heuristic}
\bibinfo{author}{Heymann, B.}, \bibinfo{author}{Mertikopoulos, P.},
  \bibinfo{year}{2021}.
\newblock \bibinfo{title}{A heuristic for estimating nash equilibria in
  first-price auctions with correlated values}.
\newblock \bibinfo{journal}{arXiv preprint arXiv:2108.04506} .
\bibitem[{Heymann and Mertikopoulos(2025)}]{heymann2025empirical}
\bibinfo{author}{Heymann, B.}, \bibinfo{author}{Mertikopoulos, P.},
  \bibinfo{year}{2025}.
\newblock \bibinfo{title}{An empirical study of fictitious play for estimating
  nash equilibria in first-price auctions with correlated values}.
\newblock \bibinfo{note}{DynaFront 2025 Workshop, NeurIPS 2025}.
\bibitem[{Jordan(1993)}]{jordan1993three}
\bibinfo{author}{Jordan, J.S.}, \bibinfo{year}{1993}.
\newblock \bibinfo{title}{Three problems in learning mixed-strategy nash
  equilibria}.
\newblock \bibinfo{journal}{Games and Economic Behavior} \bibinfo{volume}{5},
  \bibinfo{pages}{368--386}.
\bibitem[{Krishna(2009)}]{krishna2009auction}
\bibinfo{author}{Krishna, V.}, \bibinfo{year}{2009}.
\newblock \bibinfo{title}{Auction Theory}.
\newblock \bibinfo{edition}{2nd} ed., \bibinfo{publisher}{Academic Press}.
\bibitem[{Krishna and Sj{\"o}str{\"o}m(1998)}]{krishna1998convergence}
\bibinfo{author}{Krishna, V.}, \bibinfo{author}{Sj{\"o}str{\"o}m, T.},
  \bibinfo{year}{1998}.
\newblock \bibinfo{title}{On the convergence of fictitious play}.
\newblock \bibinfo{journal}{Mathematics of Operations Research}
  \bibinfo{volume}{23}, \bibinfo{pages}{479--511}.
\bibitem[{Marshall et~al.(1994)Marshall, Meurer, Richard and
  Stromquist}]{marshall1994numerical}
\bibinfo{author}{Marshall, R.C.}, \bibinfo{author}{Meurer, M.J.},
  \bibinfo{author}{Richard, J.F.}, \bibinfo{author}{Stromquist, W.},
  \bibinfo{year}{1994}.
\newblock \bibinfo{title}{Numerical analysis of asymmetric first price
  auctions}.
\newblock \bibinfo{journal}{Games and Economic Behavior} \bibinfo{volume}{7},
  \bibinfo{pages}{193--220}.
\bibitem[{Maskin and Riley(2003)}]{maskin2003uniqueness}
\bibinfo{author}{Maskin, E.}, \bibinfo{author}{Riley, J.},
  \bibinfo{year}{2003}.
\newblock \bibinfo{title}{Uniqueness of equilibrium in sealed high-bid
  auctions}.
\newblock \bibinfo{journal}{Games and Economic Behavior} \bibinfo{volume}{45},
  \bibinfo{pages}{395--409}.
\bibitem[{Milgrom and Weber(1982)}]{milgrom1982theory}
\bibinfo{author}{Milgrom, P.R.}, \bibinfo{author}{Weber, R.J.},
  \bibinfo{year}{1982}.
\newblock \bibinfo{title}{A theory of auctions and competitive bidding}.
\newblock \bibinfo{journal}{Econometrica} \bibinfo{volume}{50},
  \bibinfo{pages}{1089--1122}.
\bibitem[{Monderer and Sela(1996)}]{monderer1996a2}
\bibinfo{author}{Monderer, D.}, \bibinfo{author}{Sela, A.},
  \bibinfo{year}{1996}.
\newblock \bibinfo{title}{A $2\times2$ game without the fictitious play
  property}.
\newblock \bibinfo{journal}{Games and Economic Behavior} \bibinfo{volume}{14},
  \bibinfo{pages}{144--148}.
\bibitem[{Monderer and Shapley(1996)}]{monderer1996fictitious}
\bibinfo{author}{Monderer, D.}, \bibinfo{author}{Shapley, L.S.},
  \bibinfo{year}{1996}.
\newblock \bibinfo{title}{Fictitious play property for games with identical
  interests}.
\newblock \bibinfo{journal}{Journal of Economic Theory} \bibinfo{volume}{68},
  \bibinfo{pages}{258--265}.
\bibitem[{Paes~Leme et~al.(2020)Paes~Leme, Sivan and Teng}]{paesleme2020why}
\bibinfo{author}{Paes~Leme, R.}, \bibinfo{author}{Sivan, B.},
  \bibinfo{author}{Teng, Y.}, \bibinfo{year}{2020}.
\newblock \bibinfo{title}{Why do competitive markets converge to first-price
  auctions?}, in: \bibinfo{booktitle}{Proceedings of The Web Conference 2020},
  pp. \bibinfo{pages}{596--605}.
\bibitem[{Robinson(1951)}]{robinson1951iterative}
\bibinfo{author}{Robinson, J.}, \bibinfo{year}{1951}.
\newblock \bibinfo{title}{An iterative method for solving a game}.
\newblock \bibinfo{journal}{Annals of Mathematics} \bibinfo{volume}{54},
  \bibinfo{pages}{296--301}.
\bibitem[{Shapley(1964)}]{shapley1964some}
\bibinfo{author}{Shapley, L.S.}, \bibinfo{year}{1964}.
\newblock \bibinfo{title}{Some topics in two-person games}, in:
  \bibinfo{booktitle}{Advances in Game Theory}. \bibinfo{publisher}{Princeton
  University Press}. number~\bibinfo{number}{52} in \bibinfo{series}{Annals of
  Mathematics Studies}.
\bibitem[{Vickrey(1961)}]{vickrey1961counterspeculation}
\bibinfo{author}{Vickrey, W.}, \bibinfo{year}{1961}.
\newblock \bibinfo{title}{Counterspeculation, auctions, and competitive sealed
  tenders}.
\newblock \bibinfo{journal}{The Journal of Finance} \bibinfo{volume}{16},
  \bibinfo{pages}{8--37}.
\bibitem[{Wang et~al.(2020)Wang, Shen and Zuo}]{wang2020bayesian}
\bibinfo{author}{Wang, Z.}, \bibinfo{author}{Shen, W.}, \bibinfo{author}{Zuo,
  S.}, \bibinfo{year}{2020}.
\newblock \bibinfo{title}{Bayesian nash equilibrium in first-price auction with
  discrete value distributions}, in: \bibinfo{booktitle}{Proceedings of the
  19th International Conference on Autonomous Agents and MultiAgent Systems
  (AAMAS)}, pp. \bibinfo{pages}{1458--1466}.

\end{thebibliography}
\end{document}